\numberwithin{equation}{section}
\definecolor{SmartBlue}{RGB}{51, 51, 255}
\newtheorem{theorem}{Theorem}[section]
\newtheorem{proposition}[theorem]{Proposition}
\newtheorem{lemma}[theorem]{Lemma}
\newtheorem{definition}[theorem]{Definition}
\newtheorem{corollary}[theorem]{Corollary}
\newtheorem{remark}[theorem]{Remark}
\newtheorem{setup}[theorem]{Setup}
\newtheorem*{orient*}{Orientation Condition}
\newtheorem*{junction*}{Standard Junction Conditions}
\def\ovkil{\ov{\teta}}
\def\ovkil{\ov{\teta}}
\def\teta{\eta}
\def\kil{\teta}
\def\defi{{\stackrel{\mbox{\tiny {\textbf{def}}}}{\,\, = \,\, }}}
\def\nablao{{\stackrel{\circ}{\nabla}}}
\def\Riemo{{\stackrel{\circ}{R}}}
\def\Riemoin{\Riemo}
\def\Riemofull{{\stackrel{\circ}{\textup{Riem}}}}
\def\sone{\hat{s}}
\def\bsone{\bs{\hat{s}}}
\def\Ricc{{\mbox{\textbf{\textup{Ric}}}}}
\def\metdata{\{\mathcal{N},\gamma,\ellc,\elltwo\}}
\newcommand{\lightfrak}[1]{\scalebox{0.85}[1.2]{$\mathfrak{#1}$}}
\newcommand{\w}{\lightfrak{w}}
\newcommand{\qone}{\mathfrak{q}}
\newcommand{\bqone}{\bs{\qone}}
\def\bcalY{\bs{\calY}}
\def\calY{\mathfrak{I}}
\def\Kkil{\pounds_{\kil}g}
\def\normal{\sigma}
\def\bnormal{\bs{\normal}}
\def\G{\mathfrak{G}}
\def\sec{\mathfrak{P}}
\def\thi{\mathfrak{S}}
\def\bsecp{\bm{\sec}_{\parallel}}
\def\bthip{\bm{\thi}_{\parallel}}
\def\z{\hat{z}}
\def\bw{\bm{w}}
\def\bwp{\bw_{\parallel}}
\def\f{f}
\def\bsYn{\bs{\Yn}}
\def\bomegap{\bs{\omega}_{\parallel}}
\def\bT{\textbf{T}}
\def\trP{\textup{tr}_P}
\def\n{\mathfrak{n}}
\def\B{\mathcal B}
\def\C{\mathcal C}
\def\G{\mathcal G}
\def\Fcal{\mathcal F}
\def\bY{\textup{\textbf{Y}}}
\def\Y{\textup{Y}}
\def\bU{\textup{\textbf{U}}}
\def\U{\textup{U}}
\def\bF{\textup{\textbf{F}}}
\def\F{\textup{F}}
\def\Yn{r}
\def\Q{Q}
\def\mQ{\mathcal{H}}
\def\sigo{\stackrel{\circ}{\Sigma}}
\newcommand{\nn}{\nonumber}
\newcommand{\hate}{\hat{e}}
\newcommand{\bm}[1]{\mbox{\boldmath $#1$}}
\newcommand\ovnabla{\ov{\nabla}}
\newcommand\N{\mathcal N}
\newcommand\M{\mathcal M}
\newcommand\elltwo{\ell^{(2)}}
\newcommand\hypdata{\{ \mathcal{N},\gamma,\ellc, \elltwo, \bY\}}
\newcommand\rig{\zeta}
\newcommand\A{\mathcal A}
\def\bmell{\bm{\ell}}
\def\ellc{\bmell}
\def\nablao{{\stackrel{\circ}{\nabla}}}
\newcommand{\Rtensor}{\mathcal{R}}
\def\defi{{\stackrel{\mbox{\tiny \textup{\textbf{def}}}}{\,\, = \,\, }}}
\def\nablao{{\stackrel{\circ}{\nabla}}}
\def\Riemo{{\stackrel{\circ}{R}}}
\def\nabh{\nabla^h}
\def\Riemoin{\Riemo}
\def\Riemofull{{\stackrel{\circ}{\textup{\textbf{Riem}}}}}
\def\Ricco{{\stackrel{\circ}{\textup{\textbf{Ric}}}}}
\def\sone{s}
\def\bsone{\bs{s}}
\def\metdata{\{\mathcal{N},\gamma,\ellc,\elltwo\}}
\def\trP{\textup{tr}_P}
\def\n{\mathfrak{n}}
\def\B{A}
\def\C{B}
\def\G{\mathcal G}
\def\Fcal{\mathcal F}
\def\bY{\textup{\textbf{Y}}}
\def\Y{\textup{Y}}
\def\R{\mathscr{R}}
\def\whnabla{\widehat{\nabla}}
\def\whD{\widehat{D}}
\def\bUp{\bU_{\parallel}}
\def\bYp{\bY_{\parallel}}
\def\ellp{\ell}
\def\bellp{\ellc_{\parallel}}
\def\bU{\textup{\textbf{U}}}
\def\U{\textup{U}}
\def\bF{\textup{\textbf{F}}}
\def\F{\textup{F}}
\def\Yn{r}
\def\Q{\kappa_n}
\def\bmell{\bm{\ell}}
\def\ellc{\bmell}
\def\nablao{{\stackrel{\circ}{\nabla}}}
\newcommand{\ov}{\overline}
\newcommand{\bs}{\boldsymbol}
\newcommand{\lp}{\left(}
\newcommand{\rp}{\right)}
\newcommand{\lb}{\left\lbrace}
\newcommand{\rb}{\right\rbrace}
\newcommand{\lc}{\left[}
\newcommand{\rc}{\right]}
\newcommand{\ld}{\left.}
\newcommand{\rd}{\right.}
\newcommand{\tdo}{d}
\newcommand{\lieo}{\mathsterling}
\def\metdataa{\{\gamma,\ellc,\elltwo\}}
\newcommand{\tcb}{\textcolor{blue}}
\newcommand{\nullhyp}{\widetilde{\N}}
\newcommand{\spc}{\textup{ }}
\newsavebox\myboxA
\newsavebox\myboxB
\newlength\mylenA
\newcommand*\xoverline[2][0.75]{%
    \sbox{\myboxA}{$\m@th#2$}%
    \setbox\myboxB\null
    \ht\myboxB=\ht\myboxA%
    \dp\myboxB=\dp\myboxA%
    \wd\myboxB=#1\wd\myboxA
    \sbox\myboxB{$\m@th\overline{\copy\myboxB}$}
    \setlength\mylenA{\the\wd\myboxA}
    \addtolength\mylenA{-\the\wd\myboxB}%
    \ifdim\wd\myboxB<\wd\myboxA%
       \rlap{\hskip 0.5\mylenA\usebox\myboxB}{\usebox\myboxA}%
    \else
        \hskip -0.5\mylenA\rlap{\usebox\myboxA}{\hskip 0.5\mylenA\usebox\myboxB}%
    \fi}
 \newcounter{mnotecount}
 \newcommand{\mnote}[1]
 {\protect{\stepcounter{mnotecount}}$^{\mbox{\tiny
 $\,\bullet$\themnotecount}}$ \marginpar{
 \raggedright\tiny\em
 $\,\bullet$\themnotecount: #1} }
\title{The constraint tensor for null hypersurfaces}
\author{Miguel Manzano\footnote{miguelmanzano06@usal.es}\hspace{0.17cm} and Marc Mars\footnote{marc@usal.es}\\ \\
Instituto de F\'{\i}sica Fundamental y Matem\'aticas, IUFFyM\\
Universidad de Salamanca
}
\begin{document}

\setlength{\abovedisplayskip}{4pt} 
\setlength{\belowdisplayskip}{4pt} 

\maketitle

\vspace{-0.8cm}

\begin{abstract}
In this work we provide a definition of the constraint tensor of a null hypersurface data which is completely explicit in the extrinsic geometry of the hypersurface.\ The definition is fully covariant and applies for any topology of the hypersurface.\ For data embedded in a spacetime, the constraint tensor coincides with the pull-back of the ambient Ricci tensor.\ As applications of the results, we find three geometric quantities on any transverse submanifold $S$ of the data with remarkably simple gauge behaviour, and prove that the restriction of the constraint tensor to $S$ takes a very simple form in terms of them.\ We also obtain an identity that generalizes the standard near horizon equation of isolated horizons to totally geodesic null hypersurfaces with any topology.\ Finally, we prove that when a null hypersurface has product topology, its extrinsic curvature can be uniquely reconstructed from the constraint tensor plus suitable initial data on a cross-section.
\end{abstract}

%
%

\section{Introduction}\label{sec:Intro:CT}

The curvature of a spacetime and the geometry of an embedded hypersurface are well-known to be linked by a set of \textit{constraint equations}, which in the case of 
non-degenerate hypersurfaces are 
the standard Gauss and Codazzi equations.\  
In the null case, related results have been obtained in different contexts and with different degrees of generality.\ In particular, a formalism that puts the null geometry to the forefront is the Newman-Penrose formalism \cite{ahsan2019potential, newman1962approach}.\  Another classic approach is the use of privileged coordinates, such as the Bondi-Sachs coordinates \cite{bondi1962gravitational, sachs1962gravitational}, or adapted Gaussian null coordinates
\cite{moncrief1983symmetries}.\ Many other works have explored the connection between the geometry of null hypersurfaces and the ambient curvature.\ A representative, but necessarily far from exhaustive, list of references is 
\cite{ashtekar2002geometry, ashtekar2000isolated, bartnik1997einstein,  christodoulou2008formationblackholesgeneral,czimek2022obstruction,  galloway2000maximum, galloway2004null, gourgoulhon20073+, gourgoulhon20123+,  gourgoulhon20063+, jaramillo2009isolated, jezierski2004geometry, jezierski2000geometry, robson1973null, schouten1954ricci, wald1984general, winicour2013affine}.\ An example of a constraint equation 
is the well-known null Raychaudhuri equation (see e.g.\ \cite{carter2012gravitation, gourgoulhon20063+,wald1984general}), which relates the 
spacetime Ricci tensor along the null generators with the expansion scalar and the shear tensor of an embedded null hypersurface.\

One of the main motivations for analysing the interplay between spacetime curvature and hypersurface geometry is the study of the Einstein equations.\ This is so because the combination of the Einstein and the constraint equations  
enables a decomposition of the former 
into a set of equalities \textit{on} the hypersurface plus evolution equations along a direction transverse to the hypersurface.\ This splitting of the Einstein equations is actually the key milestone on which the initial value problems rely.\ 
In particular, in the Cauchy problem \cite{choquet1952theoreme, ringstrom2009cauchy} 
the data are, depending on the approach, either
the full ambient metric and its first transverse derivative 
on a spacelike hypersurface,  
or its first and second fundamental forms, in both cases subject to 
certain constraint equations.\ In the characteristic initial value problem the initial data is given on two intersecting null hypersurfaces \cite{cabet2014characteristic, Caciotta2005global, choquet2011cauchy, chrusciel2012many, hilditch2020revisiting, klainerman2002evolution, luk2012local, mars2023covariant, Mars2023first, rendall1990reduction,  sachs1962characteristic}.\ This null data is also subject to  constraints	which can vary in form and in number depending on which specific data one provides (see the discussion in \cite{mars2023covariant}).\ 
In any case, the 
constraint equations guarantee that one can construct a spacetime fulfilling the vacuum Einstein equations, and where the hypersurfaces happen to be embedded.\ 

Specifically, the constraint equations are identities that relate some components of spacetime curvature tensors with tensor fields encoding the intrinsic/extrinsic geometry of a hypersurface.\ In the case of a null hypersurface $\nullhyp$ embedded in a spacetime $(\M,g)$, they have usually been obtained under suitable topological assumptions  \cite{ashtekar2002geometry, ashtekar2000isolated, gourgoulhon20073+, gourgoulhon20123+, gourgoulhon20063+, katona2024uniqueness, kunduri2013classification, li2016transverse, mars2018multiple,mars2018nearhorizon}, namely that the null hypersurface has a product topology $S \times \mathbb{R}$ where $S$ is a spacelike cross-section, and sometimes under additional restrictions on the spacetime,
%
e.g.\ by making use of a $\{3+1\}$ slicing of the spacetime\footnote{I.e.\ a foliation by spacelike or null hypersurfaces.}  which enables the extension of tensor fields away from the original hypersurface.\ By exploiting the additional structure arising from $\nullhyp$ being ruled by a collection of lightlike geodesics, some of these identities have been written as 
evolution equations for geometric properties of the cross-sections of $\nullhyp$ (e.g.\ the torsion one-form and the extrinsic curvature) along the null generators.\ 
This approach, while 
advantageous in many circumstances, cannot provide fully general results because 
the topological assumptions above can only be granted on local domains of a given spacetime.\  
Furthermore, the identities 
are \textit{not} covariant in the sense that the construction depends on the coordinates or on a choice of foliation.\
This makes it necessary to analyse how the equations and the geometric objects transform under changes of  the coordinates and/or the foliation.\

The purpose of the present work is to demonstrate that much of the structure 
induced on a null hypersurface under the assumptions described above 
can be preserved 
for any topology of the hypersurface.\   
In fact, our results hold for 
\textit{abstract} (or detached) null hypersurfaces, i.e.\ null hypersurfaces that are \textit{not} yet considered as embedded in any ambient space.\   
The \textit{fully covariant} approach 
that we follow allows us to obtain 
\textit{completely general identities} relating the geometry of an abstract null hypersurface with 
the curvature of a spacetime (encoded in a tensor field on the hypersurface).\ Such approach, in addition, is \textit{purely geometric}, which makes it easily adaptable to any specific situation at hand 
(e.g.\ to scenarios where the hypersurface has a product topology, or to the case when the extrinsic curvature is defined w.r.t.\ a specific transverse vector).\  
Our conclusions have several potential applications, some of which are detailed below.\ 
To facilitate this discussion, we will first review some details about abstract hypersurfaces.

The framework that enables the description of 
hypersurfaces independently of any ambient space is the so-called \textit{formalism of hypersurface data} \cite{mars2013constraint, mars2020hypersurface} 
(see also \cite{manzano2023matching, manzano2023field, mars2024abstract, mars2023covariant, Mars2023first, mars2024transverseI, mars2024transverseII}).\ 
In this formalism,
the geometry of a hypersurface is codified 
in a detached way from any ambient space
in terms of a data set $\hypdata$, where $\N$ is a smooth manifold, $\{\gamma,\bY\}$ are symmetric $(0,2)$-tensor fields, $\ellc$ is a covector field, and $\elltwo$ is a scalar field satisfying a certain restriction (later we give the precise definition).\ When the data is embedded in a spacetime $(\M,g)$ with embedding $\phi$ and a choice of rigging\footnote{Given an embedded hypersurface $\mathcal{H}$, a rigging vector is a vector field along $\mathcal{H}$, everywhere transverse to it \cite{schouten1954ricci}.} $\rig$, the full metric $g$ along $\phi(\N)$ can be reconstructed from $\metdataa$, and $\{\ellc,\elltwo\}$ coincide with the pull-back of the one-form $g(\rig,\cdot)$ and the norm of the rigging respectively.\ The tensor $\bY$, on the other hand, encodes the pull-back to $\N$ of first transverse derivatives of $g$.\ Thus, $\metdataa$ and $\bY$ codify the intrinsic and extrinsic geometry of the hypersurface respectively.\
Although the hypersurface data formalism applies for hypersurfaces of any causal character, in this paper we shall only consider the null case.\ 

The hypersurface data formalism constitutes a very convenient framework to address initial value problems, as it does not require a priori the existence of the ambient space.\ In the works \cite{mars2023covariant,Mars2023first} on the 
characteristic initial value problem, 
the so-called \textit{constraint tensor} $\Rtensor$ on an abstract null hypersurface $\N$ was introduced.\ This is a symmetric $(0,2)$-tensor with the property that, when 
$\N$ is embedded in a spacetime $(\M,g)$, it corresponds to the pull-back to $\N$ of the  Ricci tensor of $g$.\ 
This definition was already fully covariant.\ However, by the way how this tensor was constructed the dependence on the extrinsic curvature tensor $\bY$ is not explicit (the connection and curvature tensors used in \cite{mars2023covariant,Mars2023first} to write down the covariant expressions depend on $\bY$).\ While this was sufficient for the purposes in those papers, this choice is not convenient in many other contexts.\

Our main aim in this paper is to 
find an alternative form for $\Rtensor$ which shows the full dependence on $\bY$.\ 
We achieve this in two steps.\ First, we consider null embedded hypersurface data and compute the \textit{transverse-tangent-tangent-tangent} 
and the \textit{fully tangent} components of the ambient Riemann tensor in terms of the so-called ``metric hypersurface connection"  $\nablao$ \cite{mars2013constraint,mars2020hypersurface}.\ This connection is independent of the tensor $\bY$.\ 
Then, we work out a number of identities (see Section \ref{secAbstractRicci}), which eventually allow us to write   
the constraint tensor $\Rtensor$, and its contraction with a null generator $n$ of $\N$, in terms of Lie derivatives of $\bY$ and $\bs{r }\defi \bY(n,\cdot)$ along $n$.\ The only quantity that \textit{does not} satisfy a transport equation is  the scalar $\Q \defi - \bs{r}(n)$.\ Instead, $\Q$ satisfies an equation that generalizes the null Raychaudhuri equation to the detached setting.\
We emphasize that these results hold for
\textit{completely general topologies of the null hypersurface}, and that the expressions are fully covariant, i.e.\ they do not depend on any specific choice of coordinates and/or of a foliation on $\N$ (whenever it exists).\

We have already mentioned that this form of the constraint tensor has several potential applications.\ In this paper we explore three of them.\ In Section \ref{secRiemoAB:gauge:inv} we prove that the pull-back of $\Rtensor$ to any non-degenerate codimension-one submanifold $S$ of $\N$ (not necessarily a cross-section) can be written in terms of three geometric quantities, namely a one-form $\bomegap$ and two symmetric $(0,2)$-tensors $\bsecp$, $\bthip$.\ The remarkable property of the tensors  $\{\bomegap,\bsecp,\bthip\}$ is that they exhibit very simple behaviour under changes in the choice of the rigging.\ The tensors $\bomegap$ and $\bsecp$ can be interpreted as generalizations of the standard torsion one-form and second fundamental form of $S$, defined  w.r.t.\ a null and normal rigging.\ The quantity $\bthip$ codifies curvature information of the hypersurface (namely derivatives of the extrinsic curvature), and it plays a key role in the study of horizons \cite{manzano2023PhD}.\ 
As a second application of the results in this paper, 
in Section \ref{sec:GME} we obtain a generalization of the well-known near horizon equation of isolated horizons \cite{ashtekar2000generic, ashtekar2002geometry, ashtekar2000isolated, gourgoulhon20063+, jaramillo2009isolated, krishnan2002isolated}\footnote{See also the works \cite{mars2018multiple,mars2018nearhorizon, mars2019multiple} on so-called multiple Killing horizons.}, which is valid for \textit{any} totally geodesic null hypersurface equipped with a privileged null tangent vector (irrespectively of the topology of the hypersurface).\  
This equation holds on any codimension-one non-degenerate submanifold of the hypersurface (again, not necessarily a cross-section), and can be written in terms of $\{\bomegap,\bsecp,\bthip\}$.\ In fact, in an upcoming work \cite{manzano2023master} we will combine the results of the present paper with those in \cite{manzano2023field} to derive an equation
that generalizes the near horizon equation of isolated horizons to \textit{any null hypersurface admitting a privileged null tangent vector}, allowing also for the presence of zeroes.\

Our third application is presented 
in Section \ref{sec:application}.\ Here we restrict ourselves to product topology 
and prove 
that, given a constraint tensor $\Rtensor$ plus initial data on a cross-section, one can uniquely reconstruct the tensor $\bY$ everywhere, provided 
that the detached version of the Raychaudhuri equation admits a solution.\ 
Although the result is stated for cross-sections, it is obviously true in sufficient local domains of an arbitrary $\N$.\ The result, presented in Proposition \ref{proposition:integrate:Y:tensor},  is  
useful in any situation where the extrinsic curvature plays a role, e.g.\ 
in the contexts of initial value problems (see e.g.\ \cite{mars2023covariant,Mars2023first}) or spacetime matching (see e.g.\ \cite{manzano2021null,manzano2022general,manzano2023matching}).\ In the former, it allows one to reduce the initial data to metric hypersurface data plus suitable information on a cross-section.\ In the spacetime matching context, 
Proposition \ref{proposition:integrate:Y:tensor} determines the matter and impulsive gravitational wave content of any null shell with product topology from information of the curvature of the spacetimes to be matched plus initial conditions on a cross-section of the shell.\ 
It is also worth mentioning that the results in this paper have already been used successfully in the recent works \cite{mars2024transverseI,mars2024transverseII}, where  the full transverse expansion of a spacetime metric on a general null hypersurface is studied.

The organization of the paper is as follows.\ Section \ref{sec:FHD:Prelim} is devoted to introducing basic concepts and results of the formalism of hypersurface data.\ In Section \ref{seclevicivitaonS} we analyse some properties of codimension-one non-degenerate submanifolds embedded in null hypersurface data.\ In Section \ref{secAbstractRicci} we provide a definition for the constraint tensor $\Rtensor$ where all dependence on $\bY$ is explicit.\ We then find the contractions of $\Rtensor$ with a null generator (Section \ref{sec:Constraint(n,-)}) and provide its pull-back 
$\Rtensor_{\parallel}$ to a Riemannian codimension-one submanifold $(S,h)$ in terms of the Ricci tensor of $h$ (Section \ref{secRiemoAB}).\ The paper concludes with Sections \ref{secRiemoAB:gauge:inv}, \ref{sec:GME} and \ref{sec:application}.\ As already mentioned, in the first one we introduce the quantities $\{\bomegap,\bsecp,\bthip\}$ and write the tensor $\Rtensor_{\parallel}$ in terms of them.\ In the second one, we derive a generalized near horizon equation for totally geodesic null hypersurfaces.\ Finally, in Section \ref{sec:application} we prove Proposition \ref{proposition:integrate:Y:tensor}, described above.\ We also include three appendices.\ In Appendix \ref{secGauss} we derive a generalized form of a Gauss-type identity, valid for a very general class of embedded hypersurfaces and ambient manifolds.\
This is used in the main text to compute the pull-back of the constraint tensor to a non-degenerate submanifold (Sections \ref{seclevicivitaonS} and  \ref{secAbstractRicci}).\ 
In the paper,  
we also need to use several properties of the curvature tensor of the connection $\nablao$. Appendix \ref{app:curvature:null} is devoted to deriving them.\ Finally, in Appendix \ref{sec:gauge-fix:res} we examine the gauge behaviour of several geometric quantities on $\N$. 

\subsection{Notation and conventions}\label{sec:notation}

In this paper, all manifolds are smooth, connected and without boundary. Given a manifold $\M$ we use $\mathcal{F}\lp\mathcal{M}\rp\defi C^{\infty}\lp\mathcal{M},\mathbb{R}\rp$ and $\mathcal{F}^{\star}\lp\mathcal{M}\rp\subset\mathcal{F}\lp\mathcal{M}\rp$ for the subset of no-where zero functions.  The tangent bundle is denoted 
by $T\mathcal{M}$ and $\Gamma\lp T\mathcal{M}\rp$ is the set
of sections (i.e.\ vector fields).  We use $\pounds$, $d$ for the Lie derivative and exterior derivative. Both tensorial and abstract index notation will be employed. 
We work in arbitrary dimension $\mathfrak{n}$ and use the following sets of indices:
\begin{equation}
\label{notation}
\alpha,\beta,...=0,1,2,...,\mathfrak{n};\qquad a,b,...=1,2,...,\mathfrak{n};\qquad A,B,...=2,...,\mathfrak{n}.
\end{equation}
When index-free notation is used (and only then) we shall distinguish
covariant tensors with boldface.\ As usual, parenthesis (resp.\ brackets) denote symmetrization (resp.\ antisymmetrization) of indices.\ The symmetrized tensor product is defined by $A\otimes_s B\equiv\frac{1}{2}(A\otimes B+B\otimes A)$.\ We write $\text{tr}_BA$ for the trace of a symmetric $(0,2)$-tensor $A$ w.r.t.\
a $(2,0)$-tensor $B$.\ In any Lorentzian manifold $(\mathcal{M},g)$,
we use $g^{\sharp}$, $\nabla$  for the inverse metric and Levi-Civita derivative of $g$ respectively.\ We also denote the Riemann and Ricci tensors of $g$ by $R^{\mu}{}_{\alpha\beta\nu}$,  $R_{\alpha\beta}$ (or  $\textbf{Riem}$,  $\textbf{Ric}$ in index-free notation).\ 
Our signature convention for Lorentzian manifolds $\lp \mathcal{M},g\rp$ is $(-,+, ... ,+)$.

\section{Formalism of hypersurface data}\label{sec:FHD:Prelim}
In this section we introduce all necessary aspects of the formalism of hypersurface data.\   
This work focuses on null hypersurfaces, so we only present the formalism in the null case.\ 
New results are all demonstrated, while for the known ones we give a reference.\ 
For further details on the 
formalism we refer to \cite{manzano2023matching, manzano2023field,mars2013constraint, mars2020hypersurface,  mars2024abstract, mars2023covariant, Mars2023first, mars2024transverseI, mars2024transverseII}.


A $4$-tuple  $\metdata$ consisting of
a smooth $\n$-manifold $\N$ endowed with a symmetric $(0,2)$-tensor field $ \gamma $ of signature $(0,+,...,+)$, a covector field $\ellc$ and a scalar function $\ell^{(2)}$ defines 
\textit{null metric hypersurface data} provided that the square $(\n+1)$-matrix
\begin{equation}
\bs{\A}\defi \lp \hspace{-0.1cm}
\begin{array}{cc}
\gamma_{ab} & \ell_a\\
\ell_b & \elltwo
\end{array}
\hspace{-0.1cm}\rp
\end{equation}
is non-degenerate everywhere on $\N$.\ 
%
%
%
%
\textit{Null hypersurface data} $\hypdata$ is null metric hypersurface data equipped with an additional symmetric $(0,2)$-tensor $\bY$.\ From given null metric hypersurface data one can uniquely define a symmetric $(2,0)$-tensor field $P$ and a vector field $n$ as the entries of the inverse of $\bs{\A}$ \cite{mars2013constraint}, namely
\begin{equation}
\label{ambientinversemetric}	\bs{\A}^{-1}\defi \lp \hspace{-0.1cm}
	\begin{array}{cc}
		P^{ab} & n^a\\
		n^b & 0
	\end{array}
	\hspace{-0.1cm}\rp.
\end{equation} 
By construction, $P$ and $n$ 
satisfy

\vspace{-0.55cm}

\noindent
\begin{minipage}[t]{0.175\textwidth}
	\begin{align}
		\gamma_{ab} n^b & = 0, \label{prod1} 
	\end{align}
\end{minipage}
\hfill
\hfill
\begin{minipage}[t]{0.17\textwidth}
	\begin{align}
		\ell_a n^a & = 1, \label{prod2}  
	\end{align}
\end{minipage}
\hfill
\begin{minipage}[t]{0.3\textwidth}
	\begin{align}
		P^{ab} \ell_b + \elltwo n^a & = 0,  \label{prod3} 
	\end{align}
\end{minipage}
\hfill
\begin{minipage}[t]{0.3\textwidth}
	\begin{align}
		P^{ab} \gamma_{bc} + n^a \ell_c & = \delta^a_c. \label{prod4}
	\end{align}
\end{minipage}

\vspace{0.1cm}

The vector $n$ is no-where zero by \eqref{prod2}, and the radical $\text{Rad}\gamma\vert_p\defi\{X\in T_p\N\hspace{0.05cm}\vert\hspace{0.05cm} \gamma(X,\cdot)=0\}$ of the  \textit{degenerate} tensor 
$\gamma$ is spanned by $n$ (cf.\ \eqref{prod1}).\ 
We call the integral curves of $n$   \textit{generators} of $\N$.\ 

It is useful to introduce the following tensor fields \cite{mars2020hypersurface}:

\vspace{-0.6cm}

\noindent
\begin{minipage}[t]{0.55\textwidth}
	\begin{align}
		\label{threetensors} \bF & \defi \frac{1}{2} d \ellc, & \bm{\sone} &\defi  \bF(n,\cdot), & \bU  &\defi  \frac{1}{2}\pounds_{n} \gamma ,
	\end{align}
\end{minipage}
\hfill
\hfill
\begin{minipage}[t]{0.44\textwidth}
	\begin{align}
		\label{defY(n,.)andQ}
		\bs{\Yn}&\defi\bY(n,\cdot),& \Q &\defi -\bY(n,n).
	\end{align}
\end{minipage}

\vspace{-0.cm}

Note that $\bU$ is symmetric and $\bF$ is a $2$-form, so in particular $\bsone(n)=0$.\ Moreover, $\bsone$, $\bU$ 
satisfy 
\cite{mars2020hypersurface}

\vspace{-0.6cm}

\noindent
\begin{minipage}[t]{0.4\textwidth}
	\begin{align}
		\pounds_{n} \ellc & = 2 \bm{\sone}, \label{soneprop}
	\end{align}
\end{minipage}%
\hfill
\hfill
\begin{minipage}[t]{0.6\textwidth}
	\begin{align}
		\bU (n, \cdot ) & = 0. \label{Un} 
	\end{align}
\end{minipage}

A null metric hypersurface data set $\metdata$ does not endow $\N$ with a metric tensor, so one cannot define a Levi-Civita covariant derivative.\ However, the two conditions 

\vspace{-0.55cm}

\noindent
\begin{minipage}[t]{0.4\textwidth}
	\begin{align}
		\nablao_{a} \gamma_{bc} & = - \ell_b \U_{ac} - \ell_c \U_{ab}, \label{nablaogamma} 
	\end{align}
\end{minipage}%
\hfill
\hfill
\begin{minipage}[t]{0.6\textwidth}
	\begin{align}
		\nablao_a \ell_b & = \F_{ab} - \elltwo \U_{ab} \label{nablaoll}.
	\end{align}
\end{minipage}

define a unique torsion-free covariant derivative $\nablao$ on $\N$ \cite[Prop. 4.3]{mars2020hypersurface}, called
\textit{metric hypersurface connection}.\  
The $\nablao$ derivatives of the tensor fields 
$n$ and
$P$ are \cite{mars2020hypersurface}

\vspace{-0.48cm}

\noindent
\begin{minipage}[t]{0.4\textwidth}
\begin{align}
	\nablao_b n^c &   =n^c \sone_b + P^{ac} \U_{ab}, \label{nablaonnull}
\end{align}
\end{minipage}%
\hfill
\hfill
\begin{minipage}[t]{0.6\textwidth}
\begin{align}
	\label{nablaoP}\nablao_a P^{b c}&=-\left(n^b P^{c f}+n^c P^{b f}\right) \F_{a f}-n^b n^c \nablao_a \ell^{(2)}.
\end{align}
\end{minipage}

\vspace{-0.1cm}

Later we shall need the following  identities 
relating $\nablao$ derivatives 
to Lie 
and exterior derivatives. 
\begin{lemma}
	\label{CodazziLie}
Consider null metric hypersurface data $\metdata$ and let $\theta_a$, $S_{ab}$, $A_{ab}$ be tensor fields on $\N$ with the symmetries $S_{ab} = S_{(ab)}$ and $A_{ab} = A_{[ab]}$.\  Then,
	\begin{align}
	\label{contrNantisym}  n^b\lp \nablao_{b}\theta_{d}-\nablao_{d}\theta_{b}\rp=&\spc\pounds_{n}\theta_d  -    \nablao_{d}(\bs{\theta}({n})),\\
	n^b\lp \nablao_{b}\theta_{d}+\nablao_{d}\theta_{b}\rp=&\spc\pounds_{n}\theta_d  +   \nablao_{d}(\bs{\theta}({n}))  -  2\lp \bs{\theta}(n) \sone_d       +   P^{ab}\theta_{b}\U_{ad}   \rp,\label{contrNsym}\\
	n^c \left ( \nablao_d S_{cb} - \nablao_c S_{db} \right ) =&\spc 
	\nablao_d \left ( S_{bc} n^c \right ) 
	- \pounds_{{n}} S_{bd}  + S_{cd} \nablao_b n^c
	\label{CodazziToLie} \\
	n^c \left ( \nablao_d A_{cb} - \nablao_c A_{db} \right ) =&\spc
	\nablao_{(b} \mathfrak{a}_{d)}
	- A_{c(b} \nablao_{d)} n^c
	+ \frac{1}{2} n^c (dA)_{dcb} 
	-\frac{1}{2} n^c \nablao_c A_{db}
	\label{Codazzitos} \\
	n^c \left ( \nablao_d A_{cb} - \nablao_c A_{db} \right ) =&\spc
	n^c (dA)_{dcb} + \nablao_b \mathfrak{a}_d
	- A_{cd} \nablao_b n^c \label{Codazzitos2}
\end{align}
where $dA_{abc} \defi  3 \nablao_{[a} A_{bc]}$ and $\mathfrak{a}_{a} \defi  n^c A_{ca}$.
\end{lemma}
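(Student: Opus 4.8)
The plan is to reduce every identity to the single elementary fact that, for a torsion-free connection such as $\nablao$, the Lie derivative along $n$ can be written purely in terms of $\nablao$. Concretely, for a one-form $\bs{\theta}$ one has $\pounds_n\theta_d = n^b\nablao_b\theta_d + \theta_b\nablao_d n^b$, and for a $2$-covariant tensor $T_{bd}$ one has $\pounds_n T_{bd} = n^c\nablao_c T_{bd} + T_{cd}\nablao_b n^c + T_{bc}\nablao_d n^c$; in each case the Christoffel contributions cancel precisely because $\nablao$ is torsion-free. I would record these two formulas at the outset, since all of \eqref{contrNantisym}--\eqref{Codazzitos2} follow from them together with the Leibniz rule applied to the contraction with $n$.

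For the one-form identities, the starting observation is that $n^b\nablao_d\theta_b = \nablao_d(\bs{\theta}(n)) - \theta_b\nablao_d n^b$ by the product rule. Subtracting this from $n^b\nablao_b\theta_d$ and rewriting $n^b\nablao_b\theta_d = \pounds_n\theta_d - \theta_b\nablao_d n^b$ via the one-form Lie formula yields \eqref{contrNantisym} directly; adding instead produces $\pounds_n\theta_d + \nablao_d(\bs{\theta}(n)) - 2\theta_b\nablao_d n^b$, and the last term becomes the bracketed expression in \eqref{contrNsym} once \eqref{nablaon} is substituted for $\nablao_d n^b$ (using $(d\elltwo)_d = \nablao_d\elltwo$, as $\elltwo$ is a scalar).

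The symmetric identity \eqref{CodazziToLie} follows along the same lines: using $S_{cb}=S_{bc}$, the product rule gives $n^c\nablao_d S_{cb} = \nablao_d(S_{bc}n^c) - S_{bc}\nablao_d n^c$, while the $2$-tensor Lie formula gives $n^c\nablao_c S_{db} = \pounds_n S_{bd} - S_{cd}\nablao_b n^c - S_{bc}\nablao_d n^c$. Subtracting the second from the first, the two $S_{bc}\nablao_d n^c$ terms cancel and one lands exactly on the right-hand side of \eqref{CodazziToLie}.

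The antisymmetric case needs slightly more care with conventions, and this is where I expect the only real bookkeeping difficulty. Expanding $n^c(dA)_{dcb} = n^c(\nablao_d A_{cb} + \nablao_c A_{bd} + \nablao_b A_{dc})$ from $dA_{abc}=3\nablao_{[a}A_{bc]}$, and using $A_{dc}=-A_{cd}$ together with $n^c\nablao_b A_{cd} = \nablao_b\mathfrak{a}_d - A_{cd}\nablao_b n^c$, rearranges directly into \eqref{Codazzitos2}. For \eqref{Codazzitos} I would produce a second, asymmetric expression for the same left-hand side, namely $n^c(\nablao_d A_{cb} - \nablao_c A_{db}) = \nablao_d\mathfrak{a}_b - A_{cb}\nablao_d n^c - n^c\nablao_c A_{db}$ (again from the product rule applied to $\mathfrak{a}_b = n^c A_{cb}$), and then average it with \eqref{Codazzitos2}. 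The symmetrizations $\nablao_{(b}\mathfrak{a}_{d)}$ and $A_{c(b}\nablao_{d)}n^c$ then appear automatically, and the factors $\tfrac{1}{2}$ in front of $n^c(dA)_{dcb}$ and $n^c\nablao_c A_{db}$ are exactly what the averaging produces. The main obstacle throughout is keeping the expansion $(dA)_{dcb} = \nablao_d A_{cb} + \nablao_c A_{bd} + \nablao_b A_{dc}$ and the signs coming from $A_{[ab]}$ correctly aligned; once these conventions are fixed, the algebra is entirely routine.
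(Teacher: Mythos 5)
Your proposal is correct and follows essentially the same route as the paper: everything is reduced to the torsion-free identity $\pounds_n T = n^c\nablao_c T + \sum T\cdot\nablao n$ combined with the Leibniz rule, with \eqref{nablaon} substituted to produce the bracketed term in \eqref{contrNsym}, and the antisymmetric identity \eqref{Codazzitos} obtained by writing the left-hand side as the half-sum of two equivalent expansions (which is precisely your averaging of \eqref{Codazzitos2} with the asymmetric product-rule form). The sign and convention checks you flag, in particular $(dA)_{dcb}=\nablao_d A_{cb}-\nablao_b A_{cd}-\nablao_c A_{db}$, all work out as you anticipate.
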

\begin{proof}
Since $\nablao$ is torsion-free, 
the Lie derivative of any $(0,p)$-tensor
	$T$ along any vector $V$ is 
	\begin{equation}
		\label{lieCOVtensor}(\pounds_V T)_{a_1 \cdots a_p}=V^b\nablao_b T_{a_1 \cdots a_p}+\sum_{\mathfrak{i}=1}^{p}T_{a_1\cdots a_{\mathfrak{i}-1}ba_{\mathfrak{i}+1}\cdots a_p}\nablao_{a_\mathfrak{i}}V^b.
	\end{equation}
	Equation \eqref{contrNantisym} follows from \eqref{lieCOVtensor} and $n^b\big(\nablao_{b}\theta_{d}-\nablao_{d}\theta_{b}\big)=n^b\nablao_{b}\theta_{d}+\theta_{b}\nablao_{d}n^b- \nablao_{d}\big(\bs{\theta}({n})\big)$, while 
	\eqref{contrNsym} is obtained from 
	$n^b\big( \nablao_{b}\theta_{d}+\nablao_{d}\theta_{b}\big)=n^b\nablao_{b}\theta_{d}+ \nablao_{d}(\bs{\theta}({n}))-\theta_{b}\nablao_{d}n^b=\pounds_{n}\theta_d+ \nablao_{d}\big(\bs{\theta}({n})\big)-2\theta_{b}\nablao_{d}n^b$  
	after using \eqref{nablaonnull}.\ For the symmetric tensor $S_{cd}$, \eqref{lieCOVtensor} gives
	\begin{align*}
		n^c \left ( \nablao_d S_{cb} - \nablao_c S_{db} \right ) & =
		n^c \nablao_d S_{cb} + S_{cb} \nablao_d n^c + S_{dc} \nablao_b n^c
		- \pounds_{{n}} S_{bd} =
		\nablao_d \left ( S_{bc} n^c \right ) - \pounds_{{n}} S_{db}
		+ S_{cd} \nablao_b n^c,
	\end{align*}
	which is (\ref{CodazziToLie}).\ For $A_{ab} = A_{[ab]}$, we use $(dA)_{dcb}\defi 3\nablao_{[d}A_{cb]}=\nablao_{d} A_{cb} - \nablao_b A_{cd} - \nablao_c A_{db}$ and find
	\begin{align*}
		n^c \left ( \nablao_d A_{cb} - \nablao_c A_{db} \right )
		=&\spc \frac{1}{2} n^c \left ( \nablao_d A_{cb} + \nablao_b A_{cd}
		+ \nablao_{d} A_{cb} - \nablao_b A_{cd} - \nablao_c A_{db}  -\nablao_c A_{db}  \right ) \\
		=&\spc \frac{1}{2} \Big(  \nablao_d (A_{cb} n^c) + \nablao_b ( A_{cd} n^c)
		- A_{cb} \nablao_d n^c - A_{cd} \nablao_b n^c + n^c (dA)_{dcb} - n^c \nablao_c A_{db} \Big)
		\\
		= &\spc \nablao_{(b} \mathfrak{a}_{d)}
		- A_{c(b} \nablao_{d)} n^c
		+ \frac{1}{2} n^c (dA)_{dcb} - \frac{1}{2} n^c \nablao_c A_{db} ,
	\end{align*}
	and \eqref{Codazzitos} is established.\ Finally, to derive the alternative form
	\eqref{Codazzitos2} it suffices to notice that 
	$n^c( \nablao_d A_{cb} - \nablao_c A_{db}) 
	\hspace{-0.05cm}=\hspace{-0.05cm} n^c ( \nablao_d A_{cb} + \nablao_c A_{bd} + \nablao_b A_{dc}
	+ \nablao_b A_{cd} )  \hspace{-0.05cm}=\hspace{-0.05cm} n^c (dA)_{dcb} + \nablao_b \left ( n^c A_{cd} \right ) -
	A_{cd} \nablao_b n^c$.
\end{proof}
The curvature of $\nablao$ is key to understand the geometry of null metric hypersurface data.\ We use the notation 
$\Riemo{}^{a}{}_{bcd}$, $\Riemo_{ab}$ (or  $\Riemofull$,  $\Ricco$ in index-free notation) for the curvature and Ricci tensors of $\nablao$.\ 
The tensor $\Riemo_{ab}$
is not symmetric in general, and in fact  it verifies 
\cite{mars2020hypersurface} 
\begin{align}
\Riemo_{[ab]} = \nablao_{[a} \sone_{b]} . 
\label{antisymRiemo}
\end{align}
The link between the hypersurface data formalism and the actual geometry of embedded hypersurfaces  relies on the notions of  \textit{rigging} and \textit{embeddedness} of the data.\ 
A null metric data $\metdata$ is $\{\phi,\rig\}$-embedded in a Lorentzian manifold $\lp \mathcal{M}^{\n+1},g\rp$ if there exists an embedding $\phi :\mathcal{N}\longhookrightarrow\mathcal{M}$ and a rigging $\zeta$ (i.e.\ a vector field along $\phi \lp\mathcal{N}\rp$, everywhere transverse to it),  satisfying  
\begin{equation}
	\label{emhd}
	\phi ^{\star}\lp g\rp= \gamma , \qquad\phi ^{\star}\lp g\lp\zeta,\cdot\rp\rp=\ellc, \qquad\phi ^{\star}\lp g\lp\zeta,\zeta\rp\rp=\elltwo.
\end{equation}
For hypersurface data $\hypdata$ embeddedness requires, in addition,
	\begin{equation}
	\label{YtensorEmbDef}
	\dfrac{1}{2}\phi^{\star}\lp \pounds_{\zeta}g\rp=\bY.
\end{equation}
When there is no risk of confusion we shall identify scalars and vectors on $\N$ with their counterparts on $\phi(\N)$.\ The word ``abstract" will be used to refer to mathematical objects defined solely in terms of hypersurface data, e.g.\ $\gamma$, $\bF$, $\Q$ or the manifold $\N$.\ The idea is that abstract quantities can be defined irrespectively of whether the data is embedded in an ambient space.   

When the data $\hypdata$ is $\{\phi,\rig\}$-embedded in $(\M,g)$, 
the connection $\nablao$ and the Levi-Civita derivative $\nabla$ of $g$ are related by
\begin{align}
	\nabla_{X}W
	\label{nablaXYnablao}&=\phi_{\star}\Big(\nablao_{X}W-\bY(X,W)
	n\Big) - \bU(X,W)\rig, \quad\quad\forall X,W\in\Gamma(T\N).
\end{align}

From now on, whenever we consider embedded data we will use the following setup and notation.
\begin{setup}\label{setup:basis:e_a}
Consider null hypersurface data  $\hypdata$ $\{\phi,\rig\}$-embedded in $(\mathcal{M}, g)$.\ For any local basis $\{ \hate_a \}$ of $\Gamma(T\mathcal{N})$, 
 $\{\zeta,e_a \defi \phi _{\star} (\hate_a)\}$ is a 
	basis of $\Gamma(T\mathcal{M})\vert_{\phi (\mathcal{N})}$.\ Let $\bs{\nu}$ be the	unique normal covector satisfying $\bs{\nu}(\zeta) = 1$.\ We denote the dual basis of $\{\zeta,e_a\}$ by $\{\bs{\nu}, \bs{\theta}^a\}$, and define 
	the vector fields 
	$\nu \defi g^{\sharp}(\bs{\nu},\cdot)$, ${\theta}^a \defi g^{\sharp}(\bs{\theta}^a,\cdot)$.
\end{setup}
From \eqref{prod1}-\eqref{prod4} one easily checks that $\nu$ and $\theta^a$ decompose as

\vspace{-0.27cm}

\begin{multicols}{2}
	\noindent
	\begin{align}
		\label{normal}
		\nu & =  n^a e_a, \\
		\theta^a & = P^{ab} e_b + n^a \rig,  \label{omegas}
	\end{align}
\end{multicols}

\vspace{-0.4cm}

so $\nu= \phi_{\star} n$ in the embedded case.\ 
One can also prove that 
$\bU$ coincides \cite{mars2020hypersurface} with the second fundamental form of $\phi(\N)$ w.r.t.\ the normal vector $\nu$.\ 
By  \eqref{emhd} the components of $\bs{\A}$ in the basis $\{(\hate_a,0),(0,1)\}$ coincide with those of $g$ in the basis $\{\rig,e_a\}$, 
so 
the contravariant metric $g^{\sharp}$ is written 
in terms of 
$\{\rig,e_a\}$ as 
(cf.\ \eqref{ambientinversemetric}) 
\begin{align}
	\label{gup} g^{\mu\nu} & \stackbin{\phi (\mathcal{N})}{=} 
	 n^c \left ( \rig^{\mu} e_c^{\nu} +
	\rig^{\nu}e_c^{\mu}  \right ) +P^{cd} e_c^{\mu} e_d^{\nu}
	\qquad \Longleftrightarrow\qquad g^{\mu\nu}\stackbin{\phi (\mathcal{N})}{=} e_c^{\mu} \theta^c{}^{\nu} + \rig^{\mu} \nu^\nu,
\end{align}
where the equivalence follows from  \eqref{normal}-\eqref{omegas}.\ In particular, from \eqref{gup} it follows that the square norm of 
$\bs{\nu}$ vanishes everywhere on $\N$, which is consistent with $\nu$ being null at the hypersurface.\

Further on we shall need the tangential derivative of the rigging vector $\rig$, given by \cite{manzano2023field} 
\begin{align}
	\nabla_{e_a} \rig =\frac{1}{2} \nablao_a \elltwo \nu
	+ \left ( \Y_{ab} + \F_{ab} \right ) \theta^b.
	\label{nablarig3}
\end{align}
It will also be important below to have the explicit form of those
components of the curvature tensor $R_{\alpha\beta\gamma\delta}$ of $(\mathcal{M},g)$ that are computable in terms of  hypersurface data \cite{mars2013constraint, mars1993geometry}, namely 
\begin{align}
	\hspace{-0.2cm}R_{\mu\alpha\nu\beta} \rig^{\mu} e^{\alpha}_b e^{\nu}_c e^{\beta}_d =& \spc\ell_a \Riemo{}^a_{\,\,\,bcd} + 2 \nablao_{[d} {\Y}_{c]b} 
	+ 2 \elltwo \nablao_{[d} \U_{c]b}+  \U_{b[c} \nablao_{d]} \elltwo   + 2{\Y}_{b[d}  
	\left ( \F_{c]f} + {\Y}_{c]f} \right ) n^f  , \label{Codazzi} \\
	\hspace{-0.2cm}R_{\mu\alpha\nu\beta}e^{\mu}_a e^{\alpha}_b e^{\nu}_ce^{\beta}_d 
	=& \spc 
	\gamma_{af} \Riemo{}^f_{\,\,\,bcd}
	+ 2 \ell_a \nablao_{[d} \U_{c]b}
	+ 2 {\Y}_{b[c} \U_{d]a}  + 2 \U_{b[c} {\Y}_{d]a}  + 2 \U_{b[c} \F_{d]a}. \label{Gauss}
\end{align}

\subsection{Gauge structure}\label{sec:gauge:structure:prelim}

Rigging vectors on embedded hypersurfaces are highly non-unique.\ In the hypersurface data formalism, this fact is captured by a built-in gauge freedom  (see \cite{mars2013constraint, mars2020hypersurface} for details).\ 
Consider null hypersurface data $\hypdata$, a no-where zero function $z\in\mathcal{F}^{\star}(\mathcal{N})$ and a vector field $V\in\Gamma(T\mathcal{N})$.\ The gauge-transformed  data 
$$\mathcal{G}_{\lp z,V\rp}(\hypdata) \defi \lb \mathcal{N},\mathcal{G}_{\lp z,V\rp}\lp \gamma \rp,\mathcal{G}_{\lp z,V\rp}\lp\ellc\rp,\mathcal{G}_{\lp z,V\rp}\big( \ell^{(2)}\big),\G_{(z,V)}\big(\bY \big)\rb$$ 
is defined  by
\begin{align}
	\label{gaugegamma&ell2} \hspace{-0.2cm}\mathcal{G}_{\lp z,V\rp}(\gamma)& \defi  \gamma ,\hspace{0.42cm}\mathcal{G}_{\lp z,V\rp}\lp\ellc\rp \defi z\lp\ellc+ \gamma \lp V,\cdot\rp\rp,\hspace{0.42cm}\mathcal{G}_{\lp z,V\rp}\big(\ell^{(2)}\big) \defi z^2\big( \ell^{(2)}+2\ellc\lp V\rp+ \gamma \lp V,V\rp\big),\\
	\label{gaugeY}\hspace{-0.2cm}\mathcal{G}_{\lp z,V\rp}(\bY)  & \defi z\bY+ \big( \ellc+\gamma(V,\cdot)\big)\otimes_s dz+\frac{z}{2}\lieo_{V} \gamma. 
\end{align} 
%
%
%
%
%
%
%
The set of transformations $\{\G_{(z,V)}\}$ forms a group $\mathcal{G}=\Fcal(\N)\times\Gamma(T\N)$ with composition law $\mathcal{G}_{\left(z_2, V_2\right)} \circ\mathcal{G}_{\left(z_1, V_1\right)}=\mathcal{G}_{\left(z_1 z_2 , V_2+z_2^{-1} V_1\right)}$, identity  $\mathcal{G}_{\mathbb{I}} \defi \mathcal{G}_{(1,0)}$ and inverse  $\mathcal{G}^{-1}_{(z,V)} \defi \mathcal{G}_{(z^{-1},-zV)}$. We call  $\G$ the \textit{gauge group}, each element $\G_{(z,V)}$ a \textit{gauge transformation} (also \textit{gauge group element}) and the quantities $\{z,V\}$ \textit{gauge parameters}.\ 

A gauge transformation of a hypersurface data induces transformations on the rest of geometric quantities.\ In particular,  
\begin{align}
	\label{gaugen}\mathcal{G}_{\lp z,V\rp}\lp n\rp&=z^{-1}n.
\end{align}
The connection between ${\mathcal G}$ (which is intrinsic to the data) and the freedom in choosing the rigging is as follows
\cite[Prop. 3.4]{mars2020hypersurface}.\  
If $\hypdata$ is $\{\phi,\rig\}$-embedded in $(\mathcal{M},g)$, then  $\mathcal{G}_{(z,V)}(\hypdata)$ is also embedded in  $({\mathcal M},g)$ with embedding $\phi$ but 
different rigging 
\begin{equation}
	\label{gaugerig}\mathcal{G}_{(z,V)} (\rig)  \defi  z (\rig + \phi _{\star} V).
\end{equation}

\section{Transverse submanifolds}\label{seclevicivitaonS}

Spacelike cross-sections are an important tool to analyze the geometry of null hypersurfaces.\ In line with this, in this section we discuss several geometric properties of codimension-one non-degenerate submanifolds embedded in hypersurface data.\ 
Complementary
results can be found in 
\cite{manzano2023PhD, mars2023covariant}.

A \textit{transverse submanifold $S$ is a codimension-one embedded submanifold of $\N$ to which $n$ is everywhere transverse}.\ Existence of such $S$ is always guaranteed in sufficiently local domains of $\N$.\ Note that we are not assuming that $S$ is a global section of $\N$, i.e.\ there can be generators of $\N$ that do not cross $S$.\ However, we enforce that generators intersecting 
$S$ do it only once.

To analyze the geometry of $S$ we put forward the following setup.
\begin{setup}\label{setup}
Consider a transverse submanifold $S$
with embedding $\psi:S\longhookrightarrow \mathcal{N}$ in null metric hypersurface data $\metdata$.\ 
We define $\ellc_{\parallel}\defi \psi^{\star}\ellc$ and let $\bnormal$ be the only normal covector along $\psi(S)$ 
satisfying $\bnormal(n)=1$.\ We take a basis $\{\hat{v}_A\}$ of $\Gamma(TS)$ and construct the basis $\{n,v_A\defi \psi_{\star}(\hat{v}_A)\}$ of $\Gamma(T\N)\vert_{\psi(S)}$.
\end{setup}
Let us prove that the induced metric $h\defi \psi^{\star}\gamma$ is non-degenerate.\ We need to show that a vector $X \in T_p S$ which is $h$-orthogonal to all $T_p  S$ is necessarily zero.\ 
$X$ satisfies that $\psi_{\star}\vert_p (X)$ is $\gamma$-orthogonal to all $T_p\N$ because $T_p \N = T_p S  \oplus \langle n\vert_p\rangle$  and $\gamma(n, \cdot)\vert_p =0$.\  Thus, $\psi_{\star}\vert_p (X) 
\in \text{Rad} (\gamma\vert_p)$ and hence it must be proportional to $n\vert_p$. Since
$\psi_{\star}\vert_p (X)$  is also tangent to $\psi(S)$, this can only occur if $X =0$.

We let $h^{\sharp}$ denote the contravariant metric of $h$.\ In index notation we write $h_{IJ}$ and $h^{IJ}$, and use these objects to lower and raise Capital Latin indices.\ 
We introduce the vector $\ell_{\parallel}\defi h^{\sharp}(\ellc_{\parallel},\cdot)$ (with components $\ell^A$) and the scalar $\elltwo_{\parallel}\defi h^{\sharp}(\ellc_{\parallel},\ellc_{\parallel})$.\  We will frequently simplify the notation and identify $S$, $X\in\Gamma(TS)$, $f\in\Fcal(\psi(S))$ with their counterparts $\psi(S)$, $\psi_{\star}X$ and $\psi^{\star}f$.\  Given any $(0,p)$-tensor $\bs{T}$ along $S$, we define $\bs{T}_{\parallel}\defi \psi^{\star}\bs{T}$ and write $T_{A_1\dots A_{p}}\defi \bs{T}_{\parallel}(\hat{v}_{A_1},\dots, \hat{v}_{A_{p}})$ for its components (note that we
{\em do not} write the parallel symbol in the components).

For later use, we next provide the decompositions of $\gamma$ and $P$ in a basis adapted to $S$.\ 
\begin{lemma}\label{corPUandPUU}
	In the Setup \ref{setup}, let  $\{\bnormal,\bs{\theta}^A\}\in\Gamma(T^{\star}\psi(S))$ be the dual basis of $\{n,v_A\}$.\   
	Then, 
	\begin{equation}
		\gamma=h_{AB}\bm{\theta}^A\otimes\bm{\theta}^B,\qquad
		P=h^{AB}v_A\otimes v_B-2\ell^{A} n\otimes_s v_A + \left(\elltwo_{\parallel}- \elltwo \right) n\otimes n, \label{Pdecom:abstract}
	\end{equation}
	and the following identities hold: 
	\begin{align}
		\label{PUandPUU}P^{cf}\U_{fa}&=h^{IJ}v_J^f(v_I^c-\ellp_In^c)\U_{fa},& P^{cd}\U_{ac}\U_{bd}&=h^{IJ}v_I^cv_J^d\U_{ac}\U_{bd},\\
		\label{trPYandtrPUwithPdec}\textup{tr}_P\bY&=\textup{tr}_h\bYp-2\ellp^Ar_A+\Q(\elltwo-\elltwo_{\parallel}), & \textup{tr}_P\bU&=\textup{tr}_h\bUp.
	\end{align}
\end{lemma}
\begin{proof}
	The first expression in \eqref{Pdecom:abstract} is a consequence of $\gamma(n,\cdot)=0$ and $h=\psi^{\star}\gamma$.\ On the other hand, since $P$ is symmetric it decomposes in the basis $\{n,v_A\}$ as
	\begin{align}
		\label{Pdecom:proof}P&=P(\bs{\theta}^A,\bs{\theta}^B)v_A\otimes v_B+P(\bnormal,\bs{\theta}^A) (n\otimes v_A+v_A\otimes n)+P(\bnormal,\bnormal) n\otimes n.
	\end{align}
	We now use that 
	$\ellc=\bnormal+\ell_A\bs{\theta}^A$ (because $\ellc(n)=1$ (cf.\ \eqref{prod2}) and $\ell_A = \ellc(v_A)$) and compute
	\begin{align*}
		P(\bnormal,\cdot)& =P(\ellc-\ell_A\bs{\theta}^A,\cdot)\stackbin{\eqref{prod3}}=-\elltwo n-\ell_A P(\bs{\theta}^A,\cdot)=-\lp\elltwo+\ell_A P(\bs{\theta}^A,\bnormal)\rp n-h^{AB}\ell_Av_B,\\
		\delta_A^B & =\delta_a^b\theta_b^Bv_A^a\stackbin{\eqref{prod4}}=(P^{bf}\gamma_{fa}+n^b\ell_a)\theta_b^Bv_A^a=P^{bf}\gamma_{fa}\theta_b^Bv_A^a
		=h_{AC}P(\bs{\theta}^B,\bs{\theta}^C).
	\end{align*}
	It follows that $$P(\bs{\theta}^A,\bs{\theta}^B)=h^{AB},\quad P(\bnormal,\bs{\theta}^C)=-h^{AC}\ell_A=-\ell^C,\quad P(\bnormal,\bnormal)=-\lp\elltwo- h^{AB}\ell_A\ell_B\rp= \elltwo_{\parallel}-\elltwo,$$ which establishes the second expression in  \eqref{Pdecom:abstract}.\ 
	Identities \eqref{PUandPUU} are then an immediate consequence of the decomposition of $P$ in \eqref{Pdecom:abstract}, as $\bU(n,\cdot)=0$ (cf.\ \eqref{Un}).\ Finally, for $\textup{tr}_P\bY$ and $\text{tr}_p\bU$ we find
	\begin{align*}
		\textup{tr}_P\bY
		&= P^{cd}\Y_{cd}=(h^{CD}v_C^cv_D^d-\ellp^D(n^cv_D^d+n^dv_D^c)-(\elltwo-\elltwo_{\parallel})n^cn^d)\Y_{cd},\\
		\textup{tr}_P\bU
		&= P^{cd}\U_{cd}
		\stackbin{\eqref{Un}}=h^{CD}v_C^cv_D^d\U_{cd}
		=h^{CD}\U_{CD},
	\end{align*}
	from where \eqref{trPYandtrPUwithPdec} follows 
	after using the definitions \eqref{defY(n,.)andQ}.\ 
\end{proof}
The main results in this section are  
the relation between the connection $\nablao$ and the Levi-Civita  derivative $\nabla^h$ on $S$ and    
an expression 
for the tangential components of the curvature tensor of $\nablao$ in terms of the curvature tensor of 
$h$.\ 
To establish the former, 
we need to use that 
the direct sum $T_p\N=T_pS\oplus \langle n\vert_p\rangle$ induces the following unique decomposition for any vector field $X$ along $\psi(S)$:  
\begin{align}
	X = X^{\parallel} + f_X n,\qquad\text{where}\qquad X^{\parallel}\in\Gamma(TS),\quad f_X\in\Fcal(S).  \label{decomX}
\end{align}
\begin{lemma}\label{lem:nablao:nablah}
	In the Setup \ref{setup}, let $\nabla^h$ be the Levi-Civita 
	covariant 
	derivative of $h$.\  
	Then, for any $X,Z\in\Gamma(TS)$,  it holds
	\begin{align}
		\nn \nablao_XZ=&\spc\nabla^{h}_XZ+h^{\sharp}(\ellc_{\parallel},\cdot)\bUp(X,Z)\\
		\label{nABLAwithnablah} &+\Bigg(\dfrac{1}{2}\lp \big(\nabla^h_X\ellc_{\parallel}\big)\lp Z\rp+\big(\nabla^h_Z\ellc_{\parallel}\big)\lp X\rp\rp+\big(\elltwo -\elltwo_{\parallel}\big)\bUp(X,Z)\Bigg) n.
	\end{align}
		
	\vspace{-0.4cm}
	
\end{lemma}
\begin{proof}
	The difference of two connections is a tensor.\ Using the decomposition \eqref{decomX} we can write
	\begin{equation}
		\label{new:relation:nablao:nabh}
		\nablao_XZ=\nabh_XZ-\Xi(X,Z)+\Omega(X,Z)n,
	\end{equation}
	where $\Xi$, $\Omega$ are a $(1,2)$-tensor and 
	a $(0,2)$-tensor on $S$.\ 
	Since both $\nablao$ and $\nabh$ are torsion-free, $\Xi(X,Z)$ and $\Omega(X,Z)$ are symmetric in $X,Z$.\  
	To determine the explicit form of $\Xi$, we consider another vector $W$ tangent to $S$,
	and find 
	\begin{align}
		\nonumber 0&=(\nabh_Xh)(Z,W)=\nabh_X(h(Z,W))-h(\nabh_XZ,W)-h(\nabh_XW,Z)\\
		\nn &=\nablao_X(\gamma(Z,W))-\gamma(\nablao_XZ+\Xi(X,Z),W)-\gamma(\nablao_XW+\Xi(X,W),Z)\\
		\nn &=(\nablao_X\gamma)(Z,W)-h(\Xi(X,Z),W)-h(\Xi(X,W),Z)\\
		\label{middleeq18} &=- \ellc_{\parallel}(Z) \bUp(X,W) - \ellc_{\parallel}(W) \bUp(X,Z)-h(\Xi(X,Z),W)-h(\Xi(X,W),Z),
	\end{align}
	where in the third equality we have used \eqref{new:relation:nablao:nabh} and $\gamma(n, \cdot)=0$, on the fourth that $\Xi(X,Z)\in\Gamma(TS)$ and in the last one we inserted \eqref{nablaogamma}. In index notation, \eqref{middleeq18} reads 
	\begin{equation}
		-\ell_B\bU_{AC}-\ell_C\U_{AB}-h_{CD}\Xi^{D}_{AB}-h_{BD}\Xi^{D}_{AC}=0,
	\end{equation}
	from which it is straightforward to prove that  
	${\Xi}^A_{BC}=- h^{AD}\ell_D\U_{BC}$ by adding three copies of the expression after applying a standard index permutation.\  
	To determine $\Omega$ we use its symmetry to rewrite
	\eqref{new:relation:nablao:nabh} as
	\begin{align*}
		2\Omega(X,Z)n=\nablao_XZ-\nabh_XZ+\nablao_ZX-\nabh_ZX+2\Xi(X,Z)\spc\spc.
	\end{align*}
	Contracting with the normal covector $\bnormal$ yields 
	\begin{align}
		\label{middleeq16}
		2\Omega(X,Z)=\bnormal\big(\nablao_XZ+\nablao_ZX\big).
	\end{align}
	The right-hand side can be rewritten
	in terms of $\bUp\defi \psi^{\star}\bU$ and derivatives of $\ellc_{\parallel}$  as follows.\ 
	Note first that for any vector field $X$ along $\psi(S)$ we have
	$\bnormal(X)=\ellc(X) - \ellc_{\parallel}(X^{\parallel})$ (if $X = X^{\parallel}$ it is true because $\ellc_{\parallel} = \psi^{\star} \ellc$; if
	$X = f_X n$ it is true because $\bnormal(n) = \ellc(n)=1$).\ 
	Using this and 
	\eqref{new:relation:nablao:nabh} we get 
	\begin{align}
		\nn\bnormal(\nablao_XZ)=&\spc\ellc(\nablao_XZ)-\ellc_{\parallel}(\nabh_XZ-\Xi(X,Z))=X\lp\ellc\lp Z\rp\rp-(\nablao_X\ellc)(Z)-\ellc_{\parallel}(\nabh_XZ-\Xi(X,Z))\\
		\nn =&\spc X(\ellc_{\parallel}(Z))-\bF(X,Z) + \elltwo \bUp(X,Z)-\ellc_{\parallel}(\nabh_XZ)+\ellc_{\parallel}\big(\Xi(X,Z)\big)\\
		=&\spc (\nabh_X\ellc_{\parallel})\lp Z\rp-\bF(X,Z) + \elltwo \bUp(X,Z)+\ellc_{\parallel}\big(\Xi(X,Z)\big),
	\end{align}
	where in the third equality we used \eqref{nablaoll}.
	Inserting this into \eqref{middleeq16} and using that $\bF$ is antisymmetric gives 
	\begin{align}
		\nn \Omega(X,Z)=&\spc\frac{1}{2}\lp (\nabh_X\ellc_{\parallel})\lp Z\rp+(\nabh_Z\ellc_{\parallel})\lp X\rp\rp+ \elltwo \bUp(X,Z)+\ellc_{\parallel}\big(\Xi(X,Z)\big),
		\\
		\label{Omega(X,Y)final} =&\spc\dfrac{1}{2}\lp \big(\nabla^h_X\ellc_{\parallel}\big)\lp Z\rp+\big(\nabla^h_Z\ellc_{\parallel}\big)\lp X\rp\rp+\big(\elltwo -\elltwo_{\parallel}\big)\bUp(X,Z),
	\end{align}
	which 
	completes the proof of \eqref{nABLAwithnablah}.
\end{proof}
Having obtained a Gauss-type equation relating the covariant derivatives
$\nablao$ and $\nabla^h$, we can now relate the tangential components of the curvature tensor of $\nablao$ and the curvature tensor of the induced metric $h$.\ This result relies on a generalized Gauss identity that we derive in Appendix \ref{secGauss}.\ Recall that on a semi-Riemannian manifold, the Gauss identity is an equation relating the curvature tensor of the Levi-Civita connection along tangential directions of a non-degenerate hypersurface with the curvature tensor of the induced metric and the second fundamental form. In Appendix \ref{secGauss}, we have extended this result to the more general case when the connection of the space and of the hypersurface are completely general, except for the condition that they are both torsion-free.\ By particularizing Theorem \ref{thmGeneralIdentity} (more specifically its abstract index notation form \eqref{gammaRiemanvvvGENERAL}) to the case of null hypersurface data, we get to the following result.
\begin{lemma}
	Consider null metric hypersurface data $\metdata$ and assume Setup \ref{setup}.\ Let $R^h_{ABCD}$ the Riemann tensor of $(S,h)$. Then,
	\begin{align}
		\nn v_A^a\gamma_{af}\Riemo{}^f{}_{bcd}& v_B^bv_C^cv_D^d= R^h_{ABCD}+2\nabla^h_{[C\vert}(\ellp_A\U_{B\vert D]})+\ellp_A\ellp^F\lp \U_{BD}\U_{CF}-\U_{BC}\U_{DF}\rp\\
		\label{gammaRiemanvvvOURCASE} &+\U_{AC}\lp (\elltwo-\elltwo_{\parallel})\U_{BD}+\nabla^h_{(B}\ellp_{D)}\rp-\U_{AD}\lp (\elltwo-\elltwo_{\parallel})\U_{BC}+\nabla^h_{(B}\ellp_{C)}\rp.
	\end{align}
\end{lemma}
\begin{proof}
	We particularize Theorem \ref{thmGeneralIdentity} for 
	$\whnabla=\nablao$, $\whD=\nabla^h$, $\widehat{\gamma}=\gamma$.\ Then, $\widehat{h}=h$ and \eqref{nABLAwithnablah}-\eqref{Omega(X,Y)final} hold, which means that $A{}^C{}_{AB}=\ellp^C\U_{AB}$, $A_h{}_{CAB}=\ellp_C\U_{AB}$ and $\Omega_{AB}=\nabla^h_{(A}\ellp_{B)}+(\elltwo-\elltwo_{\parallel})\U_{AB}$.\ The only term that needs further evaluation is
	$v_D^dv_A^a(\nablao_d\gamma_{af})\mathcal{P}^{f}{}_{BC}$.\ This is straightforward from \eqref{prod2} and \eqref{nablaogamma}, namely
	\begin{align}
		\nn v_D^dv_A^a(\nablao_d\gamma_{af})\mathcal{P}^{f}{}_{BC}=&\spc-v_D^dv_A^a(\ell_a\U_{df}+\ell_f\U_{da})(v_F^fA^{F}{}_{BC}+n^f\Omega_{BC})\\
		\nn =&\spc-\ellp_A\ellp^F\U_{DF}\U_{BC}-\elltwo_{\parallel}\U_{DA}\U_{BC}-\U_{DA}\Omega_{BC}\\
		\label{mideq23} =&\spc-\ellp_A\ellp^F\U_{DF}\U_{BC}-\U_{DA}(\nabla^h_{(B}\ellp_{C)}+\elltwo\U_{BC}).
	\end{align}
	Equation \eqref{gammaRiemanvvvOURCASE} follows at once after inserting \eqref{mideq23} into \eqref{gammaRiemanvvvGENERAL} and using $\gamma(n,n)=0$.
\end{proof}
We conclude the section 
with two identities involving pull-backs to $S$ of derivatives of covariant tensors.\  
We first compute the pull-back of the $\nablao$-derivative of a general $(0, p)$-tensor field $\mathcal{T}$, and then find the pull-back of the Lie derivative 
along any direction
of a general symmetric $(0,2)$-tensor $\bs{T}$ satisfying $\bs{T}(n,\cdot)=0$. 
\begin{lemma}\label{lem:pullbacktoS}
	In the Setup \ref{setup}, let $\mathcal{T}$ be any $(0, p)$-tensor field along $S$ and define 
	$\mathcal{T}_{\parallel}\defi \psi^{\star}\mathcal{T}$.\ Then,
	\begin{align}
		\nn &v_{A_1}^{a_1}\dots v_{A_{ p}}^{a_{ p}}v_B^b\nablao_b\mathcal{T}_{a_1\cdots a_{ p}} = \nabla_{B}^h\mathcal{T}_{A_1\cdots A_{ p}}-\sum_{\mathfrak{i}=1}^{ p}\ellp^J\mathcal{T}_{A_1\cdots A_{\mathfrak{i}-1}JA_{\mathfrak{i}+1}\cdots A_{ p}} \U_{A_{\mathfrak{i}}B}\\
		\label{covderpcovtensoronS}  &-\sum_{\mathfrak{i}=1}^{ p}\mathcal{T}_{a_1\cdots a_{ p}}v_{A_1}^{a_1}\dots v_{A_{\mathfrak{i}-1}}^{a_{\mathfrak{i}-1}}n^{a_{\mathfrak{i}}}v_{A_{\mathfrak{i}+1}}^{a_{\mathfrak{i}+1}}\dots v_{A_{ p}}^{a_{ p}}\lp \nabla^h_{(A_{\mathfrak{i}}}\ellp_{B)}+(\elltwo-\elltwo_{\parallel})\U_{A_{\mathfrak{i}}B} \rp.
	\end{align}
\end{lemma}
\begin{proof}
	We prove it only for covectors. The case of covariant tensors with more indices is analogous. Combining \eqref{nABLAwithnablah} and \eqref{Omega(X,Y)final}, we obtain 
	\begin{align}
		\nn v_{A}^{a}v_B^b\nablao_b\mathcal{T}_{a}=&\spc v_B\lp \mathcal{T}_{A}\rp-\mathcal{T}_{a}v_B^b\nablao_b v_{A}^{a}= v_B\lp \mathcal{T}_{A}\rp-\mathcal{T}_J(\nabla^h_{v_B}v_{A}^{J}+\ellp^J \U_{AB})-\mathcal{T}_{a}n^{a}\Omega_{AB}\\
		\nn =&\spc \nabla^h_{B} \mathcal{T}_{A}-\ellp^J\mathcal{T}_J \U_{AB}-\mathcal{T}_{a}n^{a}\lp \nabla^h_{(A}\ellp_{B)}+(\elltwo-\elltwo_{\parallel})\U_{AB}\rp.
	\end{align}
\end{proof}
\begin{lemma}\label{pullbackLieT}
	In the Setup \ref{setup}, let $\bT$ be a symmetric
	$(0,2)$-tensor on $\N$ satisfying $\bT(n, \cdot )=0$. Consider a smooth function $q\in\Fcal(\psi(S))$ and a covector field $\bm{\beta}\in\Gamma(T^{\star}\N)\vert_{\psi(S)}$ verifying $\bm{\beta} (n)=0$, and define $t^a \defi  q n^a + P^{ab} \beta_b$.\ Then,
	\begin{align}
		\lp   \pounds_{t}  T \rp_{AB} =  (q- \ell^C \beta_C ) \vert_S (\pounds_n T)_{AB}
		+  \beta^C \nabh_C T_{AB}
		+ T_{AC} \nabh_B \beta^C + T_{CB} \nabh_A \beta^C. \label{pullbackLie}
	\end{align}
\end{lemma}
\begin{proof}
	Using the decomposition of $P^{ab}$ in  \eqref{Pdecom:abstract} 
	and the fact that
	$\beta_a n^a =0$, we write
	\begin{align*}
		t^a  = q n^a + h^{AB} v_A^a v_B^b \beta_b - h^{AB} \ell_B n^a v_A^b \beta_b
		= ( q - \ell^A \beta_A ) n^a + \beta^A v_A^a.
	\end{align*}
	For any function $f$ we have $\pounds_{f n} \bT = f \pounds_n \bT$ because
	$\bT(n, \cdot) = \bT(\cdot, n)=0$. On the other hand, for any vector field $ W $ tangent to $S$ (i.e.\ such that there exists $\overline{ W } \in \Gamma(TS)$ so that $ W  |_S = \psi_{\star} \overline{ W }$) it holds $\psi^{\star} \lp \pounds_{ W } \bT \rp = \pounds_{\overline{ W }}
	\lp \psi^{\star} \bT \rp$. 
	Thus,
	\begin{align*}
		\psi^{\star} \lp   \pounds_{t}  \bT \rp =
		\psi^{\star} (\pounds_{ ( q- \ell^C \beta_C) n} \bT)
		+         \pounds_{\beta^{\sharp}} \lp \psi^{\star} \bT \rp =
		( q - \ell^C \beta_C ) \vert_S \psi^{\star} \lp \pounds_n \bT \rp
		+ \pounds_{\beta^{\sharp}} \lp \psi^{\star} \bT \rp,
	\end{align*}
	where $\beta^{\sharp}$ is the vector field in $S$ with abstract index components $\beta^A$. Since $\nabh$ is torsion-free the last term can be expanded in terms of the covariant derivative and \eqref{pullbackLie} follows.
\end{proof}

\section{The constraint tensor}
\label{secAbstractRicci}

This section is devoted to the main object of this paper, namely
the constraint tensor.\  
This tensor was originally defined for null hypersurface data in \cite{mars2023covariant,Mars2023first} in terms of the so-called \textit{hypersurface connection} $\ovnabla$.\ The derivative $\ovnabla$ is another torsion-free connection that can be defined in terms of the \textit{full} hypersurface data (including the tensor field $\bY$) and which,  in the embedded case, coincides with the connection induced from the Levi-Civita covariant derivative of the ambient space \cite{mars2013constraint}.\ In \cite{mars2023covariant, Mars2023first}, the definition of the constraint tensor  is not fully explicit in the tensor $\bY$, as the connection $\ovnabla$ and its corresponding curvature
$\ov{R}$ depend on it.\ Moreover, in \cite{mars2023covariant, Mars2023first} the constraint tensor 
is not expanded in terms of the  data, 
but instead  
it is decomposed in terms of a foliation by spacelike hypersurfaces, in  analogy with other forms of the constraint equations that have appeared in the literature.\ 
While the setup in \cite{mars2023covariant,Mars2023first} is certainly helpful, in many circumstances it is more convenient to have a definition of the constraint tensor that 
shows its full dependence on the tensor $\bY$.\  
Such definition has been proposed in \cite{manzano2023PhD} as part of the Ph.\ D.\ thesis of the first named author of this paper.\ In that work, 
the constraint tensor was defined for \textit{general} hypersurface data (i.e.\ non-necessarily null) and in terms of the metric hypersurface connection $\nablao$ (instead of $\ovnabla$), so that 
$\bY$ appears explicitly.\  
In the null case, this tensor has
already been exploited in the recent works \cite{mars2024transverseI, mars2024transverseII}.

Let us construct the constraint tensor 
in the null case with fully explicit dependence of the tensor $\bY$.\ 
For that, we 
show that 
the tangential components of the ambient Ricci tensor 
can be computed exclusively in terms of the hypersurface data (in the embedded case).\ This leads naturally to the definition, on any null hypersurface data, of a symmetric $(0,2)$-tensor that encodes the tangent part of the ambient Ricci tensor 
at the \textit{purely 
	abstract level}.\

Consider null hypersurface data $\hypdata$ $\{\phi,\rig\}$-embedded in  $(\mathcal{M}, g)$ and assume Setup \ref{setup:basis:e_a}.\ 
Combining 
$R_{\alpha\beta}\defi  g^{\mu\nu}R_{\mu\alpha\nu\beta}$ with 
the first expression in \eqref{gup},   
it follows
\begin{align*}
	R_{\alpha\beta}e_b^{\alpha}e_d^{\beta}& \stackbin{\phi (\mathcal{N})}{=} 
	%
	%
	%
	\Big( n^c\lp\rig^{\mu} e_c^{\nu}+\rig^{\nu}e_c^{\mu}\rp+P^{ac}e_a^{\mu} e_c^{\nu}\Big) R_{\mu\alpha\nu\beta}e_b^{\alpha}e_d^{\beta}\\
	& \stackbin{\phi (\mathcal{N})}{=}  
	n^c\lp R_{\mu\alpha\nu\beta}\rig^{\mu}e_b^{\alpha}e_c^{\nu}e_d^{\beta} + R_{\nu\beta\mu\alpha}\rig^{\nu}e_d^{\beta}e_c^{\mu} e_b^{\alpha}\rp + P^{ac} R_{\mu\alpha\nu\beta}e_a^{\mu}e_b^{\alpha}e_c^{\nu}e_d^{\beta},
\end{align*}
which 
can be rewritten 
in terms of the Ricci tensor $\textbf{Ric}$ and the Riemann tensor $\textbf{Riem}$ of $g$ as
\begin{align}
	\textbf{Ric} (e_b, e_d) \stackbin{\phi (\mathcal{N})}{=}  2n^c \textbf{Riem} (\rig,e_{(b\vert},e_c,e_{\vert d)})+ P^{ac} \textbf{Riem} (e_a,e_b,e_c,e_d).\label{ricciabstract1} 
\end{align}
%
By \eqref{Codazzi}-\eqref{Gauss} 
we know that  $R_{\mu\alpha\nu\beta}\rig^{\mu}e_b^{\alpha}e_c^{\nu}e_d^{\beta}$ and  $R_{\mu\alpha\nu\beta}e_a^{\mu}e_b^{\alpha}e_c^{\nu}e_d^{\beta}$ can be fully expressed in terms of the hypersurface data.\ 
This, together with \eqref{ricciabstract1}, 
means 
that 
the pull-back $\phi^{\star}\textbf{Ric}$ 
can be fully codified by $\{\gamma,\ellc,\elltwo,\bY\}$.\  
It therefore 
makes sense to find an explicit expression for the right-hand side of \eqref{ricciabstract1}.\  
For that we no longer need  to assume that the data is embedded.\ Instead, we introduce two tensors 
$\B_{bcd}$ and $\C_{abcd}$ on $\N$, which correspond to the
hypersurface data counterparts of $R_{\mu\alpha\nu\beta}\rig^{\mu}e_b^{\alpha}e_c^{\nu}e_d^{\beta}$ and
$R_{\mu\alpha\nu\beta}e_a^{\mu}e_b^{\alpha}e_c^{\nu}e_d^{\beta}$ respectively (as given in \eqref{Codazzi}-\eqref{Gauss}).\
The right-hand side
of \eqref{ricciabstract1} can then be elaborated at the abstract level by computing the contractions $n^c(\B_{bcd}+ \B_{dcd})$ and $P^{ac} \C_{abcd}$.\ Let us put forward the definition of 
$\B_{bcd}$ and $\C_{abcd}$ as dictated by \eqref{Codazzi}-\eqref{Gauss}. 
\begin{definition}
	\label{def:A:and:B:tensors}
	Given null hypersurface data $\hypdata$, the tensors $\B$, $\C$ are defined as
	\begin{align}
		\B_{bcd} \defi &  \spc 
		\ell_a \Riemo{}^a_{\,\,\,bcd} + 2 \nablao_{[d} {\Y}_{c]b} 
		+ 2 \elltwo \nablao_{[d} \U_{c]b}  +  \U_{b[c} \nablao_{d]} \elltwo  
		+ {\Y}_{b[d} \left ( 2 \left ( \F_{c]f} + {\Y}_{c]f} \right ) n^f  \right ) ,
		\label{defB} \\
		\C_{abcd} \defi & 
		\gamma_{af} \Riemo{}^f_{\,\,\,bcd}
		+ 2 \ell_a \nablao_{[d} \U_{c]b} 
		+ 2 {\Y}_{b[c} \U_{d]a} + 2 \U_{b[c}  \left ( {\Y}_{d]a} + \F_{d]a} \right ). \label{defD}
	\end{align}
\end{definition}
Our guiding principle to compute $n^c(\B_{bcd}+ \B_{dcd})$ and $P^{ac} \C_{abcd}$ 
is to write down as many derivatives of $\bY$ as possible in terms of $\pounds_{n} \bY$, i.e.\ as evolution terms along the direction $n$.\ In many circumstances, this constitutes a great advantage because it allows one to 
understand the identities as transport equations for the data tensor $\bY$.\ In fact, in Section \ref{sec:application} we shall prove that given $\bY$ on a cross-section of the data, one can determine the full tensor $\bY$ everywhere on $\N$ by integrating the identities below. 
\begin{proposition}\label{propnBPD}
	Let $\hypdata$ be null hypersurface data and $\bs{r}$, $\Q$ be given by \eqref{defY(n,.)andQ}. Then, the tensors $\B$ and $\C$ introduced in Definition \ref{def:A:and:B:tensors} satisfy the following identities:
	\begin{align}
		P^{ac} \C_{abcd} = & \spc\Riemo_{(bd)}
		- \nablao_{(b} \sone_{d)}
		+ \sone_{b} \sone_{d}
		- n(\elltwo)  \U_{bd} 
		 +  2P^{ac}
		\Big(\U_{b[a} \Y_{d]c}
		+ \U_{a[d} \Y_{c]b} \Big), \label{traceD}\\
		\nonumber  n^c \left ( \B_{bcd} + \B_{dcb} \right ) = &
		- 2 \pounds_{{n}} \Y_{bd}
		+   2 \nablao_{(b} \left ( \sone_{d)} + \Yn_{d)} \right )
		- 2\Q  \Y_{bd}- 2 \left ( \Yn_b - \sone_b \right ) \left ( \Yn_d - \sone_d \right )\\
		&+  n( \elltwo) \U_{bd}    
		+ 2 P^{ac} \left ( \Y_{c(b} - \F_{c(b} \right ) 
		\U_{d)a}  .
		\label{nBsymLemma}
	\end{align}
	Moreover, it also holds
	\begin{align}
		\nonumber  n^c &\left ( \B_{bcd} + \B_{dcb} \right) +P^{ac} \C_{abcd} =  \Riemo_{(bd)}- 2 \pounds_{{n}} \Y_{bd}- \lp 2\Q +\trP\bU \rp\Y_{bd}\\
		&\hspace{0.4cm}+ \nablao_{(b} \lp \sone_{d)}+ 2  \Yn_{d)}\rp -2\Yn_b\Yn_d+4\Yn_{(b}\sone_{d)}-s_bs_d-(\trP\bY)\U_{bd}+ 2P^{ac}\U_{a(b}\lp 2Y_{d)c}+\F_{d)c}\rp.\label{sumnBandPC}
	\end{align}
\end{proposition}
\begin{proof}
	We start by computing $\ell_f \Riemoin{}^f{}_{bcd} n^c$. 
	%
	%
	The Ricci identity applied to $\ell_b$ gives
	\begin{align}
		\ell_f \Riemoin{}^f{}_{bcd}
		& = \nablao_d \nablao_c \ell_b - \nablao_{c} 
		\nablao_d \ell_b \stackbin{\eqref{nablaoll}}{=}
		\nablao_d \left ( \F_{cb} - \elltwo \U_{cb} \right ) -
		\nablao_c \left ( \F_{db} - \elltwo \U_{db} \right ).
		\label{ellRiem}
	\end{align}  
	Contracting this with $n^c$ and using \eqref{Codazzitos2}
	applied to $A \rightarrow \bF$ and $\mathfrak{a} \rightarrow \bs{\sone}$ one gets, after inserting $d\bF =0$ (which follows from
	the definition $\bF = \frac{1}{2} d \ellc$),
	\begin{align}
		\ell_f \Riemoin{}^f{}_{bcd} n^c & =
		\nablao_b \sone_d - \F_{cd} \nablao_b n^c - 
		2 \elltwo n^c \nablao_{[d} \U_{c]b}
		+ \U_{bd} n( \elltwo)  - n^c \U_{cb} \nablao_d \elltwo.
		\label{ellRiemn}
	\end{align}
	By  \eqref{antisymRiemo} the Ricci tensor  $\Riemo_{ab}$ can be written as
	\begin{align}
		\Riemo_{bd} & = \Riemo_{(bd)} + \Riemo_{[bd]} 
		= \Riemo_{(bd)}
		+ \nablao_{[b} \sone_{d]} . \label{whatever}
	\end{align}
	Combining \eqref{prod3}-\eqref{prod4} with \eqref{ellRiemn} we then obtain 
	\begin{align}
		 P^{ac}  \left ( \gamma_{af} \Riemo{}^f{}_{bcd}  + 2 \ell_a \nablao_{[d} \U_{c]b} \right )
		&=  \Riemo_{bd} - n^c \ell_f \Riemo{}^f{}_{bcd} - 2 \elltwo n^c
		\nablao_{[d} \U_{c]b}  \nonumber  \\
		& =  \spc \Riemo_{(bd)} - \nablao_{(b } \sone_{d)}
		- \U_{bd} n ( \elltwo)   + \F_{cd} \nablao_b n^c
		+ n^c \U_{cb} \nablao_d \elltwo.  \label{term1}
	\end{align}
	We elaborate the last two terms by taking into account \eqref{Un} and \eqref{nablaonnull}. This yields
	\begin{align}
		\F_{cd} \nablao_b n^c
		+ n^c \U_{cb} \nablao_d \elltwo
		= & 
		- P^{ac}   \U_{ba} \F_{dc}  + \sone_b\sone_d. \label{term2}
	\end{align}
	We have all the ingredients to compute $P^{ac} \C_{abcd}$.\ Contracting the right hand side of
	\eqref{defD} with $P^{ac}$ and replacing \eqref{term1} and \eqref{term2}, expression \eqref{traceD} follows after simple manipulations.

	For \eqref{nBsymLemma} we start by substituting
	\eqref{ellRiem} in \eqref{defB}, which  gives
	\begin{align}
		\label{ExpB}
		\hspace{-0.4cm}\B_{bcd} = &\spc 
		\nablao_d \F_{cb}  - \nablao_c  \F_{db} 
		+ \nablao_d \Y_{cb} 
		- \nablao_c \Y_{db}
		- \frac{1}{2}  \U_{cb}   \nablao_d \elltwo
		\nonumber \\
		&+ \frac{1}{2}  \U_{db}   \nablao_c \elltwo + \Y_{bd} \left ( \F_{cf} + \Y_{cf} \right ) n^f 
		- \Y_{bc} \left ( \F_{df} + \Y_{df} \right ) n^f .
	\end{align}
	We now contract with $n^c$ and use \eqref{Un}, \eqref{CodazziToLie} with $S \rightarrow \bY$ and \eqref{Codazzitos} with $A \rightarrow \bF$ to get 
	\begin{align}
		n^c \B_{bcd} = &\spc 
		\nablao_{(b} \sone_{d)}
		- \F_{c(b} \nablao_{d)} n^c
		-\frac{1}{2} n^c \nablao_c \F_{db}
		+ 
		\nablao_d  \Yn_{b} 
		- \pounds_{{n}} \Y_{bd}  + \Y_{cd} \nablao_b n^c
		\nonumber \\
		&  + \frac{1}{2}  n( \elltwo)\U_{db}   - \Q  \Y_{bd}
		+\Yn_{b} \sone_d
		- \Yn_{b} \Yn_{d},  \label{nB} 
	\end{align}
	where we have taken into account the definitions \eqref{defY(n,.)andQ}.\  
	The symmetric part is therefore given by 
	\begin{align}
		\nn n^c \left ( \B_{bcd} + \B_{dcb} \right ) = &\spc 
		2 \nablao_{(b} \left ( \sone_{d)} + \Yn_{b)} \right )
		+ 2 \left ( \Y_{c(b} - \F_{c(b} \right ) \nablao_{d)} n^c
		- 2 \pounds_{{n}} \Y_{bd}\\
		&  - 2\Q  \Y_{bd}+ 2 \Yn_{(b} \sone_{d)}- 2 \Yn_{b} \Yn_{d} 
		+  n( \elltwo) \U_{bd}  .  \label{nBsym} 
	\end{align}
	By virtue of \eqref{nablaonnull}, we finally find
	\begin{align*}
		2 (\Y_{cb} -\F_{cb}) \nablao_d n^c =&\spc 
		2 P^{ac} \U_{da}  \left ( \Y_{cb} - \F_{cb} \right ) 
		+ 2 \sone_d\left ( \Yn_b - \sone_b \right )  ,
	\end{align*}
	which together with \eqref{Un} yields \eqref{nBsymLemma} when inserted into \eqref{nBsym}. Finally, equation \eqref{sumnBandPC} follows at once after simple index manipulations.
\end{proof}
Note that the right hand side of \eqref{traceD} is explicitly symmetric in the indices $b,d$. This property is consistent with the fact that, in the embedded case, the left-hand side of \eqref{Gauss} is symmetric under the interchange of the first and second pair of indices.\ This provides a non-trivial consistency check for  \eqref{traceD}.

Expression \eqref{sumnBandPC} motivates introducing a symmetric tensor 
$\Rtensor$ 
on $\N$ that one calls \textbf{constraint tensor} (cf.\ \cite{manzano2023PhD}). 
\begin{definition}
	\label{Rtensor_null}
	(Constraint tensor $\Rtensor$)
	Given null hypersurface data $\hypdata$, the constraint tensor $\Rtensor$ tensor is the symmetric $(0,2)$-tensor
	\begin{align}
		\nn \Rtensor_{ab} \defi  &\spc \Riemo_{(ab)}- 2 \pounds_{{n}} \Y_{ab}
		- \lp 2\Q+\trP\bU \rp\Y_{ab}+ \nablao_{(a} \lp \sone_{b)}+ 2  \Yn_{b)}\rp  \\
		\label{defabsRicci} & -2\Yn_a\Yn_b+4\Yn_{(a}\sone_{b)}-s_as_b-(\trP\bY)\U_{ab}+ 2P^{cd}\U_{d(a}\lp 2\Y_{b)c}+\F_{b)c}\rp,
	\end{align}
	where $\Q$ and $\Yn_a$ are defined by \eqref{defY(n,.)andQ}. 
\end{definition}
The whole construction has been performed so that the following result holds.
\begin{proposition}
	\label{RtensorHyp}
	Let $\{ \mathcal{N},\gamma,\ellc,\elltwo,\bY \}$ be null hypersurface data $\{\phi,\rig\}$-embedded in $(\M,g)$.\ 
	Then the Ricci tensor $\textup{\textbf{Ric}} $ of $g$ satisfies
	\begin{equation}
	\label{RicciIsPullbackRicci_0} 
	\phi ^{\star}  \textup{\textbf{Ric}}  = \Rtensor.
	\end{equation}
\end{proposition}
Conditions $\Rtensor=\Lambda\gamma$, $\Lambda\in\mathbb{R}$ can be thought of as 
vacuum constraint equations with cosmological constant $\Lambda$ on a null hypersurface.\  Such constraints have always appeared in the literature in a decomposed form adapted to a foliation by spacelike slices.\ To the best of our knowledge, the only exceptions to this 
are the already mentioned works  \cite{manzano2023PhD, mars2023covariant,Mars2023first}.\ 

As anticipated before, Definition \ref{Rtensor_null} shows the full dependence on $\bY$, 
as both $\nablao$ and $\Ricco$ depend only on the metric part of the data.\ 
Furthermore, the result \eqref{defabsRicci} involves no decomposition w.r.t.\ any foliation.\ In fact, it makes no assumption on whether such foliation exists.\ Definition \ref{Rtensor_null} is fully covariant on $\N$, even though this manifold admits no metric.\ We emphasize that it is by use of the hypersurface data formalism (in particular thanks to the existence of the connection $\nablao$) that such compact, unified form 
of the vacuum constraint equations in the null case becomes possible.

Observe that, from its interpretation in the embedded case, it is to be expected that the constraint tensor is gauge invariant at a null point.\ This was proven in \cite[Theorem 4.6]{Mars2023first} in the case of characteristic hypersurface data, namely null hypersurface data with product topology.\ However, the proof does not make use of this topological restriction, so gauge invariance of $\Rtensor$  
holds for general null hypersurface data.\ 
In particular, this means that we can compute $\Rtensor$ in any gauge, which gives a lot of flexibility to adjust the gauge to the problem at hand. 

We devote the following two sections to analyse some geometrical properties of the constraint tensor.\ Specifically, we 
find its components along the degenerate direction given by $n$, as well as its pull-back to any 
transverse 
submanifold $S$ 
of $\N$.\ Both results combined provide information on all components of the constraint tensor.\ 

\subsection{Constraint tensor along the null direction $n$}\label{sec:Constraint(n,-)}

The following theorem finds the explicit expressions for the contractions $\Rtensor(n,\cdot)$, $\Rtensor(n,n)$.\ For that we will need the contraction $\Riemo_{(ab)}n^a$, which has been computed in Appendix \ref{app:curvature:null} (see Lemma \ref{lem:Riemo:contracted:with:n}).\ We emphasize that the result does not require any topological assumption on
$\N$. In particular, the null hypersurface data does not need to be foliated by sections.
\begin{theorem}\label{thmR(n,-)andR(n,n)}
	Consider null hypersurface data $\hypdata$.\ The constraint tensor $\Rtensor$ verifies
	\begin{align}
		\label{ConstTensror(n,-)} \hspace{-0.25cm} \Rtensor_{ab}n^a &=  -\nablao_{b}\Q \hspace{-0.03cm} - \hspace{-0.03cm}\pounds_n(\Yn_b-\sone_{b}) \hspace{-0.03cm} -\hspace{-0.03cm} (\trP\bU)\lp \Yn_{b} \hspace{-0.03cm} -\hspace{-0.03cm} \sone_b\rp \hspace{-0.03cm} -\hspace{-0.03cm} \nablao_b(\textup{tr}_P\bU)\hspace{-0.03cm} +\hspace{-0.03cm} P^{cd} \left ( \nablao_c\U_{bd}\hspace{-0.03cm} -\hspace{-0.03cm} 2 \U_{bd}\sone_{c} \right ) ,\\
		\label{ConstTensror(n,n)}\hspace{-0.25cm} \Rtensor_{ab}n^an^b &= -n(\textup{tr}_P\bU)+(\trP\bU) \Q -P^{ab}P^{cd}\U_{ac}\U_{bd}.
	\end{align}
\end{theorem}
\begin{proof}
	Recall the results $\bU(n,\cdot)=0$, $\bs{\sone}(n)=0$, $\bY(n,\cdot)=\bs{\Yn}$, $\bY(n,n)=\bs{\Yn}(n)=-\Q$, and  $\bF(n,\cdot)=\bsone$.\ Particularizing \eqref{contrNsym} 
	for 
	$\bs{\theta}=\bsone+2\bs{r}$ we get
	\begin{align}
		\label{mideq32}n^a \nablao_{(a}(\sone_{b)}+2\Yn_{b)})=&\spc\dfrac{1}{2}\pounds_{n}\sone_{b}+\pounds_{n}\Yn_{b}- \nablao_{b}\Q+2\Q  \sone_b - P^{ac} \U_{bc}(\sone_{a}+2\Yn_{a}).
	\end{align}
	The contraction of \eqref{defabsRicci}
	with $n^a$ 
	gives \eqref{ConstTensror(n,-)} after  inserting \eqref{Riemosym(n,-)}, \eqref{mideq32}
	and 
	$\pounds_n\Yn_b=n^a\pounds_{n}\Y_{ab}$. Contracting \eqref{ConstTensror(n,-)}  with $n^b$ and using that $n^bP^{cd}\nablao_c\U_{bd}=-P^{cd}\U_{bd}\nablao_cn^b=-P^{ab}P^{cd}\U_{ac}\U_{bd}$ as well as $n^b\pounds_n(\sone_b-\Yn_b)=n(\Q)$ yields \eqref{ConstTensror(n,n)}.
\end{proof}
Observe that the identity \eqref{ConstTensror(n,n)} corresponds to the null Raychaudhuri equation \cite{carter2012gravitation, gourgoulhon20063+,wald1984general}.\ Indeed, from the comparison between \eqref{ConstTensror(n,n)} and \cite[Eq.\ (6.8)]{gourgoulhon20063+}, 
it is straightforward to conclude that $\text{tr}_P\bU$ plays the role of the expansion $\theta$ at the abstract level, while $P^{ab}P^{cd}\U_{ac}\U_{bd}$ stands for the term $(\n-1)^{-1}\theta^2+\sigma_{ab}\sigma^{ab}$, where $\sigma$ is the shear tensor.

\subsection{Constraint tensor on a transverse submanifold $S$}\label{secRiemoAB}

Let us now assume Setup \ref{setup} and analyze the case when we have selected a codimension-one submanifold $S$ of $\N$ to which $n$ is everywhere transverse.\ In particular, all results from Section \ref{seclevicivitaonS} can be
applied.\ Our main aim is to derive an explicit expression for the pull-back to $S$ of the constraint tensor, i.e.\ $\psi^{\star}\mathcal{R}$, in terms of the Ricci tensor of the Levi-Civita connection $\nabla^h$ (see Section \ref{seclevicivitaonS}).\ In the context of the characteristic problem in General Relativity studied in \cite{mars2023covariant, Mars2023first},  
the constraint tensor evaluated on non-degenerate submanifolds has played a relevant role.\ The analysis there, as already mentioned, is different to the present one
in several respects.\ Firstly, instead of Lie derivatives along $n$, a different evolution operator was used.\ Secondly, and more importantly, the expressions we find here are fully explicit in the tensor $\bY$.\ 
Last, but no least, our identities 
are completely general, while
in \cite{mars2023covariant, Mars2023first} a specific gauge was chosen from the outset.   

By \eqref{defabsRicci}, 
the task of computing $\psi^{\star}\mathcal{R}$
%
%
requires relating the pull-back $\psi^{\star}\Ricco$ with the Ricci tensor of $\nabla^h$. Now, computing the pull-back $\psi^{\star}\Ricco$ amounts to calculating $\Riemo_{AB}\defi  \Riemo{}^c{}_{acb} v^a_Av^b_B$, and this trace can be obtained by means of \eqref{prod4} and \eqref{Pdecom:abstract} as follows:
\begin{align}
\nn \Riemo_{AB}=&\spc \delta^{c}_f\Riemo{}^f{}_{acb} v^a_Av^b_B= \lp P^{cd}\gamma_{df}+n^c\ell_f\rp \Riemo{}^f{}_{acb} v^a_Av^b_B\\
\label{RiemoABalmosttheend}=&\spc \lp h^{CD}v_C^cv_D^d\gamma_{df}+n^c(\ell_f-h^{CD}\ellp_Cv_D^d\gamma_{df})\rp\Riemo{}^f{}_{acb} v^a_Av^b_B.
\end{align}
Thus, we need to evaluate both 
\begin{align*}
h^{CD}v_D^d\gamma_{df}\Riemo{}^f{}_{acb} v^a_Av_C^cv^b_B\qquad\text{and}\qquad n^c(\ell_f-h^{CD}\ellp_Cv_D^d\gamma_{df})\Riemo{}^f{}_{acb} v^a_Av^b_B.
\end{align*}
The first one is obtained by contracting \eqref{gammaRiemanvvvOURCASE} with $h^{CD}$.
For the second one, substituting \eqref{PUandPUU}
into \eqref{Riemoellnnull} and \eqref{gammaRiemonthird2} yields 
\begin{align}
\nn n^c\Big(\ell_f- h^{CD}\ellp_Cv_D^d\gamma_{df}\Big) &\Riemo{}^f{}_{acb}v_A^av_B^b= v_A^av_B^b\nablao_a \sone_b- \sone_A\sone_B+\lp n(\elltwo) -2\ellp^{D}\sone_D\rp\U_{AB}\\
\nn &+( \elltwo -\elltwo_{\parallel}) (\pounds_{n}\bU)_{AB}+2\ellp^{D}\lp  \sone_{(A}\U_{B)D}-v_D^dv_A^av_B^b\nablao_{[a}\U_{d]b}\rp\\
\label{mideq27} &-h^{CD}\U_{AC}\lp \F_{DB}+(\elltwo -\elltwo_{\parallel}) \U_{BD}\rp.
\end{align}
We elaborate \eqref{mideq27} by particularizing \eqref{covderpcovtensoronS} for $\mathcal{T}=\sone$, $\mathcal{T}=\bU$ and $\mathcal{T}=\ellc$. Since $s(n)= \bU(n,\cdot)=0$ they give, respectively,
\begin{align}
\label{nablasoneAB}v_A^av_B^b\nablao_a \sone_b=&\spc\nabla^h_A\sone_B-\ell^C\sone_C\U_{AB},\\
2v_D^dv_A^av_B^b\nablao_{[a}\U_{d]b} 
\label{nablaUpAB}=&\spc\nabla^h_A\U_{BD}-\nabla^h_D\U_{AB}-\ell^C\U_{CD}\U_{AB}+\ell^C\U_{AC}\U_{BD},\\
\label{FAB}v_A^av_B^b \nablao_{a}\ell_{b}=&\spc \nabla^h_{A}\ellp_{B}-\elltwo_{\parallel}\U_{AB}\quad\Longrightarrow\quad\F_{AB}
=v_A^av_B^b \nablao_{[a}\ell_{b]}=\nabla^h_{[A}\ellp_{B]},
\end{align}
with which \eqref{mideq27} becomes 
\begin{align}
\nn n^c \Big(\ell_f-h^{CD}\ellp_C & v_D^d\gamma_{df}\big)  \Riemo{}^f{}_{acb}v_A^av_B^b= \nabla_{A}^h\sone_{B}- \sone_A\sone_B+\lp n(\elltwo) +\ellp^{C}\ellp^D\U_{CD} -3\ellp^{C}\sone_C\rp \U_{AB}\\
\nn &+(\elltwo -\elltwo_{\parallel})(\pounds_{n}\bU)_{AB}+ 2\ellp^{C}\sone_{(A}\U_{B)C}+\ellp^{C}\nabla_{C}^h\U_{A B}-\ellp^{C}\nabla_{A}^h\U_{C B}\\
\label{2ndTERM} &-\lp h^{CD}(\elltwo -\elltwo_{\parallel})+\ellp^C\ellp^{D}\rp\U_{AC} \U_{BD} -\frac{1}{2}h^{CD}(\nabla^h_{D}\ellp_{B}-\nabla^h_{B}\ellp_{D})\U_{AC}.
\end{align}
The Ricci tensor $\Riemo_{AB}$ follows by substituting \eqref{2ndTERM} 
and \eqref{gammaRiemanvvvOURCASE} (contracted with $h^{CD}$) into \eqref{RiemoABalmosttheend}:
\begin{align}
\nn \Riemo_{AB}=&\spc R^h_{AB}+\nabla_{A}^h\sone_{B}- \sone_A\sone_B+\Big( n(\elltwo) +2\ellp^C\ellp^D\U_{CD} -3\ellp^{C}\sone_C+ \nabla^h_C\ellp^C\\
\nn &+  (\textup{tr}_h\bUp)(\elltwo-\elltwo_{\parallel})\Big) \U_{AB}+(\elltwo -\elltwo_{\parallel})(\pounds_{n}\bU)_{AB}+(\textup{tr}_h\bUp)\nabla^h_{(A}\ellp_{B)}\\
\nn &+2\ellp^{C}\lp \nabla_{C}^h\U_{A B}+ \sone_{(A}\U_{B)C}-\nabla_{(A}^h\U_{B)C}\rp-2\Big( h^{CD}(\elltwo -\elltwo_{\parallel})\\
\label{RiemoAB_0} &+\ellp^C\ellp^{D}\Big)\U_{AC} \U_{BD}-h^{CD}\lp \U_{DB}\nabla^h_{(A}\ellp_{C)}+ \U_{DA}\nabla^h_{(B}\ellp_{C)}\rp.
\end{align}
Note that all terms in \eqref{RiemoAB_0} except from $\nabla_{A}^h\sone_{B}$ are symmetric.\  
This implies that $\Riemo_{[AB]}=\nabla_{[A}^h\sone_{B]}$, which is in agreement with equation \eqref{antisymRiemo} and provides a consistency check to \eqref{RiemoAB_0}.\ The symmetrized tensor is
\begin{align}
\nn \Riemo_{(AB)}=&\spc R^h_{AB}+\nabla_{(A}^h\sone_{B)}- \sone_A\sone_B+\Big( n(\elltwo) +2\ellp^C\ellp^D\U_{CD} -3\ellp^{C}\sone_C+ \nabla^h_C\ellp^C\\
\nn &+  (\textup{tr}_h\bUp)(\elltwo-\elltwo_{\parallel})\Big) \U_{AB}+(\elltwo -\elltwo_{\parallel})(\pounds_{n}\bU)_{AB}+(\textup{tr}_h\bUp)\nabla^h_{(A}\ellp_{B)}\\
\nn &+2\ellp^{C}\lp \nabla_{C}^h\U_{A B}+ \sone_{(A}\U_{B)C}-\nabla_{(A}^h\U_{B)C}\rp-2\Big( h^{CD}(\elltwo -\elltwo_{\parallel})\\
\label{RiemoAB_1} &+\ellp^C\ellp^{D}\Big)\U_{AC} \U_{BD}-h^{CD}\lp \U_{DB}\nabla^h_{(A}\ellp_{C)}+ \U_{DA}\nabla^h_{(B}\ellp_{C)}\rp.
\end{align}
Having obtained \eqref{RiemoAB_1}, we can now write down the relation between the pull-back to $S$ of the constraint tensor and the Ricci tensor of the induced metric $h$.
\begin{theorem}\label{thmRconstraintAB}
Consider null hypersurface data $\hypdata$ and assume Setup \ref{setup}.\ Let $R^{h}_{AB}$ be the Ricci tensor of the Levi-Civita connection $\nabla^h$ on $S$. Then, the pull-back 
to $S$ of the constraint tensor $\Rtensor$ defined by \eqref{defabsRicci} is given by
\begin{align}
	\nn \Rtensor_{AB} =&\spc  R^h_{AB} +2\nabla^h_{(A} \lp \sone_{B)}+r_{B)}\rp-2(\Yn_A-\sone_A)(\Yn_B-\sone_B)\\
	\nn &+(\elltwo -\elltwo_{\parallel})(\pounds_{n}\bU)_{AB}- 2 (\pounds_{{n}} \bY)_{AB}- \lp 2\Q +\textup{tr}_h\bUp \rp\lp \Y_{AB}-\nabla^h_{(A}\ellp_{B)}\rp\\
	\nn &+\lp n(\elltwo) +2\ellp^C\ellp^D\U_{CD} -4\ellp^{C}\sone_C + \lp \textup{tr}_h\bUp+\Q\rp(\elltwo-\elltwo_{\parallel}) -\textup{tr}_h\bYp+ \nabla^h_C\ellp^C \rp \U_{AB}\\
	\nn &+2\ellp^{C}\lp \nabla_{C}^h\U_{A B}-\nabla_{(A}^h\U_{B)C}-\lp 2\lp \Yn_{(A}-\sone_{(A}\rp+\ellp^{D}\U_{D(A} \rp \U_{B)C}\rp\\
	\label{RconstraintABfinal}
	&+ 2h^{CD}\lp 2\Y_{D (A}-\nabla^h_{D}\ellp_{ (A}-(\elltwo -\elltwo_{\parallel})\U_{D(A} \rp \U_{B)C}.
\end{align}
\end{theorem}
\begin{proof}
We need to multiply \eqref{defabsRicci} 
by $v_A^av_B^b$.\ One comes across a term $v_A^av_{B}^{b}\nablao_{(a}( \sone_{b)} +2r_{b)})$ which we elaborate by using \eqref{covderpcovtensoronS} for $\mathcal{T}=\bsone$ and $\mathcal{T}=\bs{r}$ (recall that $\bsone(n)=0$, $\bs{r}(n)=-\Q$), thus obtaining
\begin{align}
	\nn v_A^av_{B}^{b}\nablao_{(a}\lp \sone_{b)}\hspace{-0.02cm} +\hspace{-0.02cm}2r_{b)}\rp \hspace{-0.02cm} = \hspace{-0.02cm}\nabla^h_{(A} \lp \sone_{B)}\hspace{-0.02cm} + \hspace{-0.02cm} 2r_{B)}\rp \hspace{-0.02cm} + \hspace{-0.02cm} 2\Q\nabla^h_{(A}\ellp_{B)} \hspace{-0.02cm} - \hspace{-0.02cm} \lp \ellp^J\lp \sone_J \hspace{-0.02cm} + \hspace{-0.02cm} 2r_J \rp \hspace{-0.02cm} - \hspace{-0.02cm} 2\Q(\elltwo\hspace{-0.02cm}-\hspace{-0.02cm}\elltwo_{\parallel})\rp \U_{AB}.
\end{align}
Since $F_{Bc} n^c =  - \sone_{B}$ and $\F_{AB}= \nabla^h_{[A}\ellp_{B]}$ (by \eqref{FAB}), inserting \eqref{PUandPUU}-\eqref{trPYandtrPUwithPdec} into \eqref{defabsRicci} 
yields
\begin{align}
	\nn \Rtensor_{AB} \defi  &\spc \Riemo_{(AB)}- 2 (\pounds_{{n}} \bY)_{AB}
	- \lp 2\Q +\textup{tr}_h\bUp \rp\Y_{AB}+ \nabla^h_{(A} \lp \sone_{B)}+2r_{B)}\rp+ 2\Q\nabla^h_{(A}\ellp_{B)}\\
	\nn & -2\Yn_A\Yn_B+4\Yn_{(A}\sone_{B)}-\sone_A\sone_B-\lp \textup{tr}_h\bYp-\Q(\elltwo-\elltwo_{\parallel})+ \ellp^J\sone_J   \rp \U_{AB}\\
	&+ 2 h^{CD}\U_{D(A}\lp 2\Y_{B)C}+\ellp_C\lp \sone_{B)}-2\Yn_{B)}\rp+\dfrac{1}{2}\lp \nabla^h_{B)}\ellp_C-\nabla^h_{C}\ellp_{\vert B)}\rp\rp.
\end{align}
Substituting expression \eqref{RiemoAB_1} for $\Riemo_{(AB)}$ and reorganizing terms, one easily arrives at \eqref{RconstraintABfinal}.
\end{proof}

\section{Gauge invariant quantities on a transverse submanifold $S$}\label{secRiemoAB:gauge:inv} 

Equation \eqref{RconstraintABfinal} is rather complicated, mainly because it has been written in a completely arbitrary gauge.\ This is clearly advantageous since the gauge can then be adjusted to the problem at hand.\ However, the equation involves a few quantities that are gauge invariant, namely
the constraint tensor $\Rtensor_{AB}$ and the metric $h_{AB}$ together with all its derived objects, such as the Levi-Civita covariant derivative $\nabla^h$ and
the Ricci tensor $R^h_{AB}$.\ A natural question arises as to whether one can find additional objects with simple gauge behaviour, so that one can write down
\eqref{RconstraintABfinal} fully in terms of gauge invariant quantities.\ There is an obvious answer to this, namely that the sum of all terms in the right-hand side of \eqref{RconstraintABfinal} except for the first one must necessarily be a gauge invariant quantity.\ While this must be true, it is clearly not very helpful.\ However, the idea behind it turns out to be useful.\ If we can find 
gauge invariant quantities that can then be substituted in the equation, then the reminder must also be gauge invariant.\ This procedure can lead to the determination of gauge invariant objects that would have been very hard to guess otherwise.\ 
This is the task we set up to do in the present section.

Based on the gauge behaviour discussed in Lemma \ref{Gauge_trans_UFsone} and Corollary \ref{Gr-s}, we  write down two quantities on $S$ with very simple gauge behaviour. The underlying reason why such objects behave in this way can be undestood  from the notion of normal pair and the associated geometric quantities on $S$ defined and studied in \cite{mars2023covariant}. For simplicity, however, here we simply put forward the definitions and find explicitly how they transform under an arbitrary change of gauge.
\begin{lemma}\label{gauge_invariant_1}
	In Setup \ref{setup}, define the covector $\bomegap$ and the symmetric $(0,2)$-tensor $\bsecp$ on $S$ by
	\begin{align}
		\label{def:omega:beta:gauge}\bomegap & \defi  \psi^{\star} ( \bsone - \bsYn) - \bUp ( \ell_{\parallel}^{\sharp}, \cdot), \qquad\bsecp\defi  \psi^{\star}  \bY + \frac{1}{2} \lp  \elltwo_{\parallel} - \elltwo|_S \rp\bUp- \frac{1}{2} \pounds_{\ell_{\parallel}^{\sharp}} h.
	\end{align}
	Under an arbitrary gauge transformation with gauge parameters $\{z,V\}$, they transform as
	\begin{align}
		\mathcal{G}_{\lp z,V\rp}( \bomegap ) = \bomegap - \frac{1}{\z} d \z, \qquad \quad\mathcal{G}_{\lp z,V\rp}( \bsecp ) = \z \bsecp,
	\end{align}
	where $\z \defi  \psi^{\star} z$.
\end{lemma}
\begin{proof}
	The proof is based on the results of Appendix \ref{sec:gauge-fix:res}, where the gauge behaviour of several hypersurface data quantities is derived.  In particular we use the definitions $\bw \defi  \gamma (V, \cdot)$
	and $\f \defi  \ellc ( V)$ introduced there.
	From \eqref{gaugeellbis} and \eqref{Uprime}  it follows
	(prime denotes a gauge-transformed object)
	\begin{align}
		\bellp' = \z (\bellp + \bwp ), \qquad \bUp'= \z^{-1} \bUp.
		\label{bellpprime}
	\end{align}
	Thus $\ell^A{}^{\prime} = \z ( \ell^A +  w^A )$ and 
	$\lp \U_{AB} \ell^B \rp^{\prime} = \U_{AB} \lp \ell^B +  w^B \rp$.\ 
	The transformation law of $\bomegap$ follows at once from this and Corollary
	\ref{Gr-s} (recall that $\bU(n,\cdot)=0$, $\gamma(n,\cdot)=0$). Concerning $\bsecp$ we use the decomposition $V^a = f n + P^{ab}  w_b$ (cf.\ \eqref{Vdecom}) 
	and apply Lemma \ref{pullbackLieT} to the transformation law \eqref{gaugeY} of $\bY$.\  This gives
	\begin{align*}
		\psi^{\star} \bY' =  \z \psi^{\star} \bY
		+ \bellp \otimes_s d \z + \z  \lp f |_S - \ell^C  w_C \rp
		\bUp  + \frac{1}{2} \pounds_{\z  w^{\sharp}} h.
	\end{align*}
	Since 
	\begin{align}
		\hspace{-0.2cm}\elltwo_{\parallel}{}^{\prime} = \z^2 \lp \elltwo_{\parallel} + 2 \ell^C  w_C
		+  w^C  w_C \rp, \quad  \elltwo{}^{\prime} |_S = \z^2 \lp \elltwo |_S + 2 f|_S
		+  w^C   w_C \rp, \label{l2l2parallelprime}
	\end{align}
	the first because of definition $\elltwo_{\parallel}\defi h^{\sharp}(\ellc_{\parallel},\ellc_{\parallel})$ and the second being a consequence of \eqref{gaugeell2bis} together with 
	\eqref{Pdecom:abstract}, 
	one finds
	\begin{align*}
		\lp    (\elltwo_{\parallel} - \elltwo|_S ) \bUp \rp^{\prime}
		=  \z ( \elltwo_{\parallel} - \elltwo|_S ) \bUp 
		+ 2 \z \lp \ell^C  w_C - f |_S \rp \bUp.
	\end{align*}
	Given that $(\ell_{\parallel}^{\sharp})^{\prime} = \z  \ell_{\parallel}^{\sharp} + \z  w^{\sharp}$
	and $\pounds_{\z \ell_{\parallel}^{\sharp}} h = \z \pounds_{\ell_{\parallel}^{\sharp}} h
	+ 2 \bellp \otimes_s d \z$ all terms involving $ w^{\sharp}$ and $\tdo \z$ in
	$\bsecp^{\prime}$ cancel out and the transformation law
	$\bsecp^{\prime} = \z \bsecp$ follows.
\end{proof}
The result states, in particular, that $\bomegap$ and $\bsecp$ are nearly gauge invariant, and in fact that they are exactly gauge invariant under the subgroup
\begin{align*}
	{\mathcal G}_1 \defi  \{ 1, V \} \subset {\mathcal G} = \Fcal^{\star}(\N) \times \Gamma(T \N).
\end{align*}
The fact that ${\mathcal G}_1$ is a subgroup of ${\mathcal G}$ is immediate from the composition law of \tcb{$\G$ (see Section \ref{sec:gauge:structure:prelim})}.  

Following the idea outlined above, the next step is to rewrite the constraint tensor $\Rtensor$ on the submanifold $S$ in terms of these quantities.\  We still need to decide which objects of \eqref{RconstraintABfinal} are to be replaced. For $\bsecp$ there is only one natural choice, namely $\psi^{\star} \bY$. For $\bomegap$, we could replace either $\bm{\sone}$ or $\bsYn$, but the second choice is preferable because $\bomegap$ is not at the level of metric hypersurface data since it  involves some components of the tensor $\bY$ as well.

The following result is obtained by a simple computation
whereby $\bsYn$ and $\psi^{\star} \bY$ are replaced in terms of $\bomegap$ and $\bsecp$ respectively in \eqref{RconstraintABfinal}.
\begin{proposition}
	\label{GaugeInvForm}
	Assume Setup \ref{setup}.\ The pull-back to $S$ of the constraint tensor $\Rtensor$ reads
	\begin{align}
		\nn   \Rtensor_{AB} = & \spc R^h_{AB} - 2 \nabla^h_{(A} \omega_{B)}
		- 2 \omega_A \omega_B - \lp  2 \Q + \textup{tr}_h \bUp \rp
		\sec_{AB} \\
		&- (\textup{tr}_h \sec) \U_{AB} + 4 \sec^C{}_{(A} \U_{B)C} - 2  \thi_{AB},
		\label{RtensorABFinal_2}
	\end{align}
	where
	\begin{align}
		\nn \thi_{AB} \defi & \spc (\pounds_{{n}} \bY)_{AB} - \frac{1}{2} (\elltwo -\elltwo_{\parallel})(\pounds_{n}\bU)_{AB} -2 \nabla^h_{(A} \sone_{B)}
		\\
		&  - \lp \frac{1}{2} n(\elltwo) + \ellp^C\ellp^D\U_{CD} - 2 \ellp^{C}\sone_C \rp \U_{AB} + \ellp^C \lp -\nabla^h_C \U_{AB}+  2 \nabla^h_{(A} \U_{B) C} \rp.
		\label{defthi}
	\end{align}
\end{proposition}
The definition of the symmetric $(0,2)$-tensor $\bthip$ is not artificial.\ As mentioned above, the fact that the tensors $\psi^{\star}\Rtensor$ and $\Ricc^h$ are gauge invariant, together with the simple gauge behaviour of $\bomegap$,
$\bsecp$, $\bUp$ (cf.\ \eqref{Uprime}) and $\Q$ (cf.\ \eqref{Qprime}), imply that $\bthip$ must also have a simple gauge behaviour.\  
We emphasize that while the existence and explicit form of the
${\mathcal G}_1$-gauge invariant quantities $\bomegap$ and
$\bsecp$ can be justified by the use of normal pairs and their associated geometric objects \cite{mars2023covariant}, the existence of the  ${\mathcal G}_1$-gauge invariant quantity $\bthip$ could not be anticipated and comes as an interesting by-product of the constraint tensor.\  The tensor $\bthip$ contains information on the  first order variation of the extrinsic curvature $\bY$ along the null direction $n$.\ Defining this geometric object is one of the main results in this paper.

The quantity $\bthip$ has several interesting features that, in our opinion, deserve further
investigation. Here we shall only mention that this object is not only ${\mathcal G}_1$-gauge invariant and it has a simple full ${\mathcal G}$-gauge behaviour (which makes it computable in any gauge)  but it is also intrinsic to the submanifold $S$.\ By ``intrinsic" we mean that it encodes geometric information of $S$ as a submanifold of $\N$ (or of the ambient space $({\mathcal M},g)$ in case the data is embedded), independently of $S$ belonging or not to any foliation of $\N$.\ This information is at the level of second derivatives (curvature), unlike $\bomegap$ or $\bsecp$ which involve only first derivatives (extrinsic curvature).

The gauge behaviour of $\bthip$ is obtained next as a consequence of Proposition
\ref{GaugeInvForm}. 
\begin{corollary}
	\label{invthi}
	Under a gauge transformation with gauge parameters $\{z,V\}$, 
	$\bthip$ transforms as
	\begin{align}
		{\mathcal G}_{(z,V)} \lp  \thi \rp_{AB}  =
		\thi_{AB}
		+ \frac{1}{\z} \nabla^h _A \nabla^h_B \z
		- \frac{2}{\z^2} \nabla^h_A z \nabla^h_B \z
		+ \frac{2}{\z} \omega_{(A} \nabla^h_{B)} \z
		+ \frac{\z_n}{\z} \sec_{AB},
		\label{gaugeTransrp}
	\end{align}
	where $\z\defi  z |_S$ and $\z_n\defi  n(z)|_S$. In particular $\bthip$ is invariant under the subgroup ${\mathcal G}_1$.
\end{corollary}
\begin{proof}
	We apply a  gauge transformation with gauge parameters $\{z,V\}$ to \eqref{RtensorABFinal_2} and subtract the equation itself.\ Using a prime to denote gauge
	transformed objects, one finds
	\begin{align*}
		0 = -2 \nabla^h_{(A} \lp \omega'_{B)} - \omega_{B)} \rp
		- 2 \omega'_A \omega'_B + 2 \omega_A \omega_B
		- 2 \left ( \Q' \z - \Q \right ) \sec_{AB} 
		-2 \thi'_{AB} + 2 \thi_{AB},
	\end{align*}
	where we  used the gauge invariance of $\psi^{\star} \Rtensor$, $h$, $\nabla^h$ and $\Ricc^h$, as well as the fact that $\bUp$ scales with $\z^{-1}$ (cf.\ \eqref{Uprime}) while $\bsecp$ scales with $\z$ (so their product is gauge invariant).\ Using the definition
	$\z_n\defi  n(z)|_S$ and inserting $\bomegap' = \bomegap - \z^{-1} \tdo \z$ as well as \eqref{Qprime}, the result follows after simple cancellations. 
\end{proof}
\begin{remark}
	As already said, guessing that $\bthip$ has a simple gauge behaviour without the aid of Proposition \ref{GaugeInvForm} appears to be hard.\ Conversely, checking explicitly that $\bthip$ has the gauge behaviour described in Corollary
	\ref{invthi} serves as a stringent consistency test for the validity
	of Proposition \ref{GaugeInvForm}.\ It is worth mentioning that the computation that proves \eqref{gaugeTransrp} by direct calculation is somewhat long and will not be included here.\ For details in this regard we refer to \textup{\cite{manzano2023PhD}}.
\end{remark}

\section{A generalized near horizon equation}\label{sec:GME}

To illustrate the importance of the quantities $\{\bomegap,\bsecp,\bthip\}$ and the result \eqref{RtensorABFinal_2}, let us consider the case of a totally geodesic null hypersurface.\ In the recent work \cite{manzano2023field}, we have introduced three new notions of Killing horizons that are \textit{purely abstract}, meaning that they have been constructed without requiring the horizon to be embedded in any ambient space.\ We have also proved that these notions generalize the standard concepts of non-expanding, weakly isolated and isolated horizons \cite{ashtekar2000generic,ashtekar2002geometry,ashtekar2000isolated,gourgoulhon20063+,jaramillo2009isolated,krishnan2002isolated} to completely general topologies and to horizons (possibly) with fixed points.\  
%
One of these notions is that of 
\textit{abstract Killing horizon of order zero}, which is null hypersurface data $\hypdata$ admitting a gauge-invariant vector field $\ovkil\in\text{Rad}\gamma$ so that $(i)$ the subset $\mathcal{S}\defi \{p\in\N\spc\vert\spc\kil\vert_p=0\}$ has empty interior and $(ii)$ $\bU=0$ \cite[Def.\ 6.1]{manzano2023field}.\

For our purposes here, let us exclude the possibility of $\ovkil$ vanishing on $\N$ and consider the case when the horizon $\hypdata$ is $\{\phi,\rig\}$-embedded in a spacetime $(\M,g)$.\ In these circumstances, the hypersurface $\phi(\N)$ is totally geodesic\footnote{Recall that $\bU$ is the second fundamental form of $\phi(\N)$ w.r.t.\ the null normal  $\phi_{\star}n$.}.\  
Since $\ovkil$ belongs to the radical of $\gamma$, it follows that $\ovkil=\alpha n$ for a suitable (nowhere zero) function $\alpha\in\Fcal^{\star}(\N)$.\ Furthermore, 
the gauge-invariance of $\ovkil$ and the gauge behaviour \eqref{gaugen} of $n$ allow one to
select the gauge so that $\ovkil=n$.\ Following the setup and notation in \cite{manzano2023field}, we let $\kil$ be the extension of $\phi_{\star}\ovkil$ off $\phi(\N)$ and define the quantities 
\begin{align}
	\label{defwpqb}
	 \w\defi \phi^{\star}\big((\Kkil) \left (\rig,  \nu  \right )\hspace{-0.05cm}\big),\quad  \bqone\defi \phi^{\star}\big((\Kkil) \left (\rig, \cdot \right )\hspace{-0.05cm}\big),\quad
	\bcalY\defi \dfrac{1}{2}\phi ^{\star}\big ( \pounds_{\rig} \Kkil \big ).
\end{align}
%
%
A key result in \cite{manzano2023field} is equation (5.15), which establishes a relation between the Lie derivative of $\bY$ along $n$ and 
the deformation tensor $\Kkil$ of $\kil$ on a general null hypersurface admitting a null and tangent privileged vector field.\ Setting 
$\bU=0$ and $\alpha=1$ in such equation gives 
\begin{align}
	\label{Lie:id:sigma} 
	0=\pounds_n\Y_{ab} -2\nablao_{( a}\sone_{b)}+\nablao_{(a}\qone_{b)}+\w\Y_{ab}-\calY_{ab}.
\end{align}
We can now compute the pull-back of \eqref{Lie:id:sigma}  to a transverse submanifold $S$ of $\N$.\ Assuming Setup \ref{setup} and using the identities 
(recall \eqref{covderpcovtensoronS} and note that $\bqone(n)=\w$)
\begin{align}
2v_A^av_B^b\nablao_{(a}\sone_{b)}&=2\nabh_{(A}\sone_{B)},\qquad 
v_{A}^{a}v_B^b \nablao_{(a}\qone_{b)}=  \nabla^h_{(A} \qone_{B)}-\w \nabla^h_{(A}\ellp_{B)},
\end{align}
together with the definition of $\bsecp$ in \eqref{def:omega:beta:gauge}, 
it is straightforward to find 
\begin{align}
	\label{Lie:id:sigma:pullback} 
	(\pounds_n\bY)_{AB}
	= 2\nabh_{(A}\sone_{B)}-\nabla^h_{(A} \qone_{B)}-\w\sec_{AB}+\calY_{AB}.
\end{align}
Equation \eqref{Lie:id:sigma:pullback} provides the derivative $(\pounds_n\bY)_{AB}$ in terms of hypersurface data and the quantities $\{\w,\bqone,\bcalY\}$ introduced in \eqref{defwpqb}.\ These quantities depend on the extension of $\kil\vert_{\phi(\N)}$ off $\phi(\N)$, and in particular they vanish identically whenever $\kil$ happens to be a Killing vector on $(\M,g)$.\ Inserting \eqref{Lie:id:sigma:pullback} into the definition \eqref{defthi} of $\bthip$ yields 
\begin{align}
	\thi_{AB}= & \spc (\pounds_{{n}} \bY)_{AB}  -2 \nabla^h_{(A} \sone_{B)}=-\nabla^h_{(A} \qone_{B)}-\w\sec_{AB}+\calY_{AB},
\end{align}
which in turn transforms \eqref{RtensorABFinal_2} into
\begin{align}
	 0=\frac{1}{2}\lp \Rtensor_{AB}-R^h_{AB} \rp+  \nabla^h_{(A} \omega_{B)}+ \omega_A \omega_B+\lp \Q -\w 
	\rp\sec_{AB}-\nabla^h_{(A} \qone_{B)}+\calY_{AB}.
	\label{near:hor:eq:U=0}
\end{align}
Identity \eqref{near:hor:eq:U=0} holds for \textit{any} totally geodesic embedded null hypersurface $\phi(\N)$ admitting an everywhere non-zero null and tangent \textit{gauge-invariant} vector field $\kil\vert_{\phi(\N)}$.\ The construction, in addition, can be adjusted to \textit{any possible extension of  $\kil\vert_{\phi(\N)}$ off $\phi(\N)$}.\
In order to endow \eqref{near:hor:eq:U=0} with a geometric interpretation, we first need to know the meaning of the objects $\Q$, $\bomegap$ and $\bsecp$.\ 
The function $\Q$ defined in \eqref{defY(n,.)andQ} is the surface gravity of $\phi_{\star}n$, since 
\begin{equation}
	\nabla_{\phi_{\star}n}\phi_{\star}n\stackbin{\eqref{nablaXYnablao}}=\phi_{\star}\Big(\nablao_{n}n-\bY(n,n)
	n\Big) \stackbin{\eqref{nablaonnull}}=\Q\phi_{\star}n. 
\end{equation}
On the other hand, when the rigging vector $\rig$ is taken null and transverse to $\phi\big(\psi(S)\big)$, a straightforward calculation based on \eqref{YtensorEmbDef} allows one to conclude that $\bs{r}_{\parallel}$ and $\bYp$ coincide with the torsion one-form 
and the second fundamental form of $\psi(S)$ w.r.t.\ 
$\rig$.\  
The quantities $\bomegap$ and $\bsecp$ can therefore be understood as $\G_1$-invariant versions of these tensor fields, defined for a more general rigging.\ For all these reasons, \eqref{near:hor:eq:U=0} constitutes a generalization of the standard near horizon equation of isolated horizons \cite{ashtekar2000generic,ashtekar2002geometry,ashtekar2000isolated,gourgoulhon20063+,jaramillo2009isolated,krishnan2002isolated},  
and extends it to the more general case of a totally geodesic null hypersurface with arbitrary topology, and with a privileged generator $\kil\vert_{\phi(\N)}$ which can be extended arbitrarily off $\phi(\N)$.\ In fact, in a forthcoming work \cite{manzano2023master} we shall obtain a new version of equation \eqref{near:hor:eq:U=0}, valid for \textit{completely general} null hypersurfaces equipped with a gauge-invariant tangent null vector.

\section{Construction of the tensor $\bY$ from initial data on a cross-section}\label{sec:application}

In Section \ref{secAbstractRicci}, we emphasized the importance of defining the constraint tensor $\Rtensor$ so that the tensor $\bY$ appears explicitly.\ We also stressed that writing the identities in terms of the Lie derivative of $\bY$ along $n$ could be advantageous in many circumstances.\ Our purpose here is to demonstrate that, given null \textit{metric} hypersurface data $\metdata$ 
plus suitable initial data on a cross-section $S$ of $\N$, the results of Section \ref{secAbstractRicci} allow for the construction of a hypersurface data tensor $\bY$ everywhere on $\N$.\ In order to prove this, we shall make use of a reasoning similar to the one in \cite[Theorem 4.7]{mars2024transverseII}, where from $\Lambda$-vacuum null metric data\footnote{Namely null metric hypersurface data with constraint tensor $\Rtensor=\Lambda \gamma$, $\Lambda\in\mathbb{R}$.} 
plus certain initial data on a  cross-section $S$ 
the authors prove existence of a semi-Riemannian manifold where the data is embedded so that the $\Lambda$-vacuum equations are satisfied to infinite order at the hypersurface.
%
Note, however, that in the next proposition we do not restrict ourselves to $\Lambda$-vacuum and the result is stated at the purely abstract level.\ 
\begin{proposition}\label{proposition:integrate:Y:tensor}
Consider null metric hypersurface data $\metdata$ admitting 
a cross-section $S$ (i.e.\ 
a transverse submanifold $S$ intersected precisely once by each integral curve of $n$), and assume Setup \ref{setup}.\ Let $\bs{\beta}$, $\bs{\mathcal{Y}}$ be a covector and a symmetric $(0,2)$-tensor on $S$, and $f\in\Fcal(\N)$, $\R_{ab}$ be respectively a scalar  function and a symmetric $(0,2)$-tensor verifying
\begin{equation}
	\label{eq:for:rho:v1}\R_{ab}n^an^b = -n(\textup{tr}_P\bU)+(\trP\bU) f -P^{ab}P^{cd}\U_{ac}\U_{bd}\spc\spc\text{on}\spc\spc\N.
\end{equation}
Then, there exists a unique symmetric $(0,2)$-tensor field $\bY$ on $\N$  satisfying 
\begin{align}
\label{requirements:Yconstruction}	\Q\defi-\bY(n,n)=f\spc\spc\text{on}\spc\spc \N,\qquad\quad and\qquad\quad \psi^{\star}\big(\bY(n,\cdot)\big) =\bs{\beta},\spc\spc \psi^{\star}\bY=\bs{\mathcal{Y}}\spc\spc\text{on}\spc\spc S,
\end{align}
such that 
the constraint tensor of the hypersurface data $\{\N,\gamma,\ellc,\elltwo,\bY\}$ 
agrees with $\R_{ab}$. 
\end{proposition}
\begin{proof}
The idea is to construct, from 
$\{\bs{\beta}, \bs{\mathcal{Y}}\}$ and 
$f$, a covector $\mathbbm{r}$ and a symmetric $(0,2)$-tensor $\mathbb{Y}$ on $\psi(S)$, and then extend them off $\psi(S)$ by using suitable transport equations along the generators of $\N$.\ We then prove that $\mathbbm{r}(n)=-f$ and  $\mathbb{Y}(n,\cdot)=\mathbbm{r}$ everywhere on $\N$, as well as that $\mathbb{Y}$ is symmetric.\ This allows one to construct hypersurface data from $\metdata$ and $\mathbb{Y}$, and check that the constraint tensor of such data precisely coincides with $\R_{ab}$.\ 
It is important to introduce a quantity $\mathbbm{r}$ that is a priori unrelated to $\mathbb{Y}(n,\cdot)$ because this allows one to analyze the constraint tensor equations as ODEs, instead of as  PDEs.\ This is why part of the argument is to show that, a posteriori,  $\mathbb{Y}(n,\cdot)=\mathbbm{r}$.
	

Let $\mathbbm{r}_b$, $\mathbb{Y}_{ab}$ be the (unique) solutions of the following ODEs  along the integral curves of $n$: 
\begin{align}
\label{eq:for:rhoa:3456}\R_{ab}n^a =&  -\nablao_{b}f  - \pounds_n(\mathbbm{r}_b-\sone_{b})  - (\trP\bU)\lp \mathbbm{r}_{b}  - \sone_b\rp  - \nablao_b(\textup{tr}_P\bU) + P^{cd} \left ( \nablao_c\U_{bd} - 2 \U_{bd}\sone_{c} \right ),\\
\nn \R_{ab} =  &\spc \Riemo_{(ab)}- 2 \pounds_{{n}} \mathbb{Y}_{ab}
	- \lp 2f+\trP\bU \rp\mathbb{Y}_{ab}+ \nablao_{(a} \lp \sone_{b)}+ 2  \mathbbm{r}_{b)}\rp  \\
\label{defabsRicci:2} & -2\mathbbm{r}_a\mathbbm{r}_b+4\mathbbm{r}_{(a}\sone_{b)}-s_as_b-(\trP\mathbb{Y})\U_{ab}+ 2P^{cd}\U_{d(a}\lp 2\mathbb{Y}_{b)c}+\F_{b)c}\rp,
\end{align}
 with initial data 
\begin{equation}
\nn 	-\mathbbm{r}(n)\stackbin{\psi(S)}=f,\quad\spc  \mathbbm{r}(\psi_{\star}X)\stackbin{\psi(S)}=\bs{\beta}(X), \quad\spc \mathbb{Y}(n,\cdot)\stackbin{\psi(S)}=\mathbbm{r},
	\quad\spc\mathbb{Y}(\cdot,n)\stackbin{\psi(S)}=\mathbbm{r},
	\quad\spc \mathbb{Y}(\psi_{\star}X,\psi_{\star}Z)\stackbin{\psi(S)}=\bs{\mathcal{Y}}(X,Z)
\end{equation}
for all $X,Z\in\Gamma(TS)$.\  
The tensor  $\mathbb{Y}_{ab}$ is symmetric as a consequence of 
$\R_{ab}$ being symmetric.\ Indeed, \eqref{defabsRicci:2} implies  
\begin{equation}
	0=  - 2 \pounds_{{n}} \big(\mathbb{Y}_{ab}-\mathbb{Y}_{ba}\big)
	- \lp 2f+\trP\bU \rp\big(\mathbb{Y}_{ab}-\mathbb{Y}_{ba}\big),
\end{equation}
which combined with $\mathbb{Y}_{ab}=\mathbb{Y}_{ba}$ on $\psi(S)$ establishes the symmetry of $\mathbb{Y}$ everywhere on $\N$.\ Contracting \eqref{eq:for:rhoa:3456} with $n^b$ yields
\begin{equation}
	\label{eq:for:rho:v2}\R_{ab}n^an^b=  -n(f +\mathbbm{r}(n)\big)  - (\trP\bU)\hspace{0.05cm}\mathbbm{r}(n)  - n(\textup{tr}_P\bU) - P^{bf}P^{cd}\U_{bd}\U_{cf} 
\end{equation}
after using $\bsone(n)=0$, $\bU(n,\cdot)=0$ and $n^b\nablao_c\U_{bd}=-\U_{bd}\nablao_cn^b=-P^{bf}\U_{bd}\U_{cf}$ (cf.\ \eqref{nablaonnull}).\ Subtracting \eqref{eq:for:rho:v1} from \eqref{eq:for:rho:v2} then gives the ODE
\begin{equation}
	\label{eq:for:n(f+r(n))} 0 =  n\big(f +\mathbbm{r}(n)\big)  + (\trP\bU)\big(\mathbbm{r}(n)  + f\big)
\end{equation}
along the generators of $\N$.\ This equation, together with the initial condition $\mathbbm{r}(n)=-f$ on $\psi(S)$, guarantees that $\mathbbm{r}(n)=-f$ everywhere on $\N$.\ 
Our next aim is to show that $\mathbb{Y}(n,\cdot) = \mathbbm{r}$ holds everywhere, not just on the initial section $S$.\ 
Indeed, combining \eqref{Riemosym(n,-)}, $\bU(n,\cdot)=0$, $\bsone(n)=0$, and $\mathbbm{r}(n)=-f$ with the identity (recall \eqref{contrNsym}) $$n^b \nablao_{(b}\lp \sone_{a)}+ 2  \mathbbm{r}_{a)}\rp=\frac{1}{2}\pounds_{n}\lp \sone_{a}+ 2  \mathbbm{r}_{a}\rp    -\nablao_{a}f  +  2f \sone_a   -   P^{cb}\lp \sone_{b}+ 2  \mathbbm{r}_{b}\rp\U_{ca},$$ one finds that the contraction of \eqref{defabsRicci:2} with $n^b$ is
	\begin{align}
		\nn \R_{ab}n^b =  & - 2 \pounds_{{n}} \big(\mathbb{Y}_{ab}n^b\big)
		- \lp 2f+\trP\bU \rp\mathbb{Y}_{ab}n^b    +2f \mathbbm{r}_a -\nablao_{a}f +\pounds_{n}\lp \sone_{a}+    \mathbbm{r}_{a}\rp\\
		& -\nablao_a(\textup{tr}_P\bU)+ (\textup{tr}_P\bU)\sone_a  +P^{cd}\lp\nablao_c\U_{ad}  + 2 \U_{ca} \big( \mathbb{Y}_{bd}n^b  -\sone_{d}  -    \mathbbm{r}_{d}\big) \rp.
		\label{defabsRicci:4} 
	\end{align}
	Subtracting \eqref{eq:for:rhoa:3456} from \eqref{defabsRicci:4}, one obtains the following ODE for $\mathbb{Y}_{ab}n^b-\mathbbm{r}_a$:
	\begin{align}
		0=2 \pounds_{{n}} \big(\mathbb{Y}_{ab}n^b-\mathbbm{r}_a\big)+\lp   \big(2f +\trP\bU \big)\delta_a^d - 2P^{cd}  \U_{ca}\rp \big( \mathbb{Y}_{bd}n^b   -    \mathbbm{r}_{d}\big)  , 
	\end{align}
	which together with the fact that 
	$\mathbb{Y}(n,\cdot)=\mathbbm{r}$ on $\psi(S)$ guarantees that 
	$\mathbb{Y}(n,\cdot)=\mathbbm{r}$ everywhere on $\N$.\ 
	Since $\mathbb{Y}$ is symmetric, this  
	%
	means
	that $\{\N,\gamma,\ellc,\elltwo,\bY\defi\mathbb{Y}\}$ constitutes null hypersurface data.\ The tensor $\bY$ 
	satisfies \eqref{requirements:Yconstruction} by construction.\   
	Moreover, by \eqref{defabsRicci:2} it holds that $\R_{ab}$ coincides with the constraint tensor of $\{\N,\gamma,\ellc,\elltwo,\bY\}$ (cf.\ \eqref{defabsRicci}). 
	Uniqueness follows from the fact that we are solving ODEs with prescribed initial data.\
\end{proof}
Observe that, in the case when $\N$ does not admit a cross-section,  Proposition \ref{proposition:integrate:Y:tensor} still holds locally near any transverse submanifold $S$, so one can still construct a unique hypersurface data tensor field $\bY$ in a neighbourhood of $S$.\ 
In general the solutions in different patches will not combine well to define a tensor field $\bY$ globally on $\N$.\ This property is guaranteed when $\N$ has product topology, but not in other cases.\  
For instance, if $\N$ had topology $S\times\mathbb{S}^1$, $\mathbb{S}^3$ with the integral curves of $n$ along the Hopf fibration, or $\mathbb{T}^3$,  the condition of 
$\bY$ being everywhere smooth on $\N$ would not follow directly from the integration of the ODEs above, and would require a careful analysis of compatibility of solutions along different generators when they are either closed ($S\times \mathbb{S}^1$ and Hopf fibration cases) or become infinitely close to each other ($\mathbb{T}^3$ case).

\appendix

\section{A generalized Gauss identity}\label{secGauss}
In this appendix we present a generalization of the well-known Gauss identity (see e.g.\ \cite{kobayashi1969vol2}).\ On any semi-Riemannian manifold, the Gauss identity relates  the curvature tensor of the Levi-Civita connection  along tangential directions of a non-degenerate hypersurface with the curvature tensor of the induced metric and the second fundamental form. It has been generalized in a number of directions, e.g.\ for the induced connection associated to  a transverse (rigging) vector \cite{mars1993geometry}. Here we find an identity where the connection of the space and of the hypersurface are completely general, except for the condition that they are both torsion-free.
 
Our primary interest will be in applying this identity to
a non-degenerate
codimension-one submanifold of a space endowed with null hypersurface data.\ 
However, the identity is  more general and may be of independent interest.\ We remark that the tensor $\widehat{\gamma}$ in the statement of Theorem
\ref{thmGeneralIdentity}
is  arbitrary, so neither $\widehat{\gamma}$ nor its pull-back $\widehat{h}$ to the submanifold are assumed to be non-degenerate.
\begin{theorem}\label{thmGeneralIdentity}
  Consider a smooth manifold $\N$ endowed with a symmetric $(0,2)$-tensor field $\widehat{\gamma}$ and a torsion-free connection $\whnabla$.\ Let
$\psi:S\longhookrightarrow\N$ be an embedded hypersurface in $\N$ and assume that $S$ is equipped with another torsion-free connection $\whD$.\ Define $\widehat{h}\defi\psi^{\star}\widehat{\gamma}$ 
  and the tensor $\mathcal{P}$ by means of
\begin{align*}
\widehat{\nabla}_XY=\widehat{D}_XY+\mathcal{P}(X,Y)\qquad\forall X,Y\in\Gamma(TS),
\end{align*}
and assume that there exists a transversal vector field $n$ along
$S$ satisfying $\psi^{\star}(\widehat{\gamma}(n,\cdot)) =0$.
Define the $(0,2)$-tensor $\Omega$ and the $(1,2)$-tensor $A$ on $S$ by
decomposing $\mathcal{P}(X,Y)$ in a tangential and a transverse part along
$n$ as follows:
\begin{align}
\label{Sdecomp}\mathcal{P}(X,Y) = A(X,Y) + \Omega(X,Y) n .
\end{align}
Then, for all $X,Y,Z,W\in\Gamma(TS)$ it holds 
\begin{align}
\nn \widehat{\gamma}(W,R^{\whnabla}(X,Y)Z)=&\spc \widehat{h}(W,R^{\whD}(X,Y)Z)+(\whD_XA_{\widehat{h}})(W,Y,Z)-(\whD_YA_{\widehat{h}})(W,X,Z)\\
\nn & +\widehat{h}(A(Y,W),A(X,Z))-\widehat{h}(A(X,W),A(Y,Z))\\
\nn &-(\whnabla_X\widehat{\gamma})(W,\mathcal{P}(Y,Z))+(\whnabla_Y\widehat{\gamma})(W,\mathcal{P}(X,Z))\\
\label{mideq22}& +\widehat{\gamma}(n,n)\lp\Omega(Y,W)\Omega(X,Z)-\Omega(X,W)\Omega(Y,Z)\rp,
\end{align}
where $A_{\widehat{h}}(W,X,Z)\defi\widehat{h}(W,A(X,Z))$.
\end{theorem}
\begin{proof}
The connections are torsion-free, so  $\mathcal{P}(X,Y)$, $A(X,Y)$, 
$\Omega(X,Y)$ are all symmetric in $X$, $Y$. We start by computing
\begin{align}
\label{mideq28}\widehat{\nabla}_X\widehat{\nabla}_YZ&=\whnabla_X(\whD_YZ+\mathcal{P}(Y,Z))=\whD_X\whD_YZ+\mathcal{P}(X,\whD_YZ)+\whnabla_X(\mathcal{P}(Y,Z)),\\
\label{mideq29}\widehat{\nabla}_{[X,Y]}Z&=\whD_{[X,Y]}Z+\mathcal{P}(\whD_XY,Z)-\mathcal{P}(\whD_XY,Z).
\end{align} 
The quantity $(\mathscr{D}_X\mathcal{P})(Y,Z)\defi\whnabla_X(\mathcal{P}(Y,Z))-\mathcal{P}(\whD_XY,Z)-\mathcal{P}(Y,\whD_XZ)$ is tensorial in $X,Y,Z$, and takes values in the space of vector fields (not necessarily tangent) along $\psi(S)$. Inserting \eqref{mideq28}-\eqref{mideq29} into the definition of 
curvature tensor 
yields
\begin{equation}
\nn R^{\whnabla}(X,Y)Z=R^{\whD}(X,Y)Z+(\mathscr{D}_X\mathcal{P})(Y,Z)-(\mathscr{D}_Y\mathcal{P})(X,Z).
\end{equation}
We now insert the decomposition \eqref{Sdecomp}. Using that $\widehat{\gamma}(n,W)=0$ and $\widehat{h}(X,Y)\defi\widehat{\gamma}(X,Y)$ gives
\begin{align}
\nn \hspace{-0.2cm}\widehat{\gamma}(\whnabla_XW,\mathcal{P}(Y,Z))
 =&\spc\widehat{\gamma}(\whnabla_XW,A(Y,Z)+\Omega(Y,Z)n)\\
\label{mideq30} =&\spc\widehat{h}(\whD_XW,A(Y,Z))+\widehat{h}(A(X,W),A(Y,Z))+\widehat{\gamma}(n,n)\Omega(X,W)\Omega(Y,Z),
\end{align}
from where it follows
\begin{align}
\nn \widehat{\gamma}(W,(\mathscr{D}_X\mathcal{P})(Y,Z))=&\spc \widehat{\gamma}(W,\whnabla_X(\mathcal{P}(Y,Z)))-\widehat{\gamma}(W,\mathcal{P}(\whD_XY,Z))-\widehat{\gamma}(W,\mathcal{P}(Y,\whD_XZ))\\
\nn =&\spc\whnabla_X\lp \widehat{\gamma}(W,\mathcal{P}(Y,Z))\rp-(\whnabla_X\widehat{\gamma})(W,\mathcal{P}(Y,Z))-\widehat{\gamma}(\whnabla_XW,\mathcal{P}(Y,Z))\\
\nn & -\widehat{\gamma}(W,\mathcal{P}(\whD_XY,Z))-\widehat{\gamma}(W,\mathcal{P}(Y,\whD_XZ))\\
\nn \stackbin{\eqref{mideq30}}=&\spc\whD_X\lp \widehat{h}(W,A(Y,Z))\rp-(\whnabla_X\widehat{\gamma})(W,\mathcal{P}(Y,Z))-\widehat{h}(\whD_XW,A(Y,Z))\\
\nn &-\widehat{h}(A(X,W),A(Y,Z))-\widehat{\gamma}(n,n)\Omega(X,W)\Omega(Y,Z)\\
\nn &-\widehat{h}(W,A(\whD_XY,Z))-\widehat{h}(W,A(Y,\whD_XZ))\\
\nn =&\spc(\whD_X\widehat{h})(W,A(Y,Z))-(\whnabla_X\widehat{\gamma})(W,\mathcal{P}(Y,Z))+\widehat{h}(W,(\whD_XA)(Y,Z)))\\
\nn & -\widehat{h}(A(X,W),A(Y,Z))-\widehat{\gamma}(n,n)\Omega(X,W)\Omega(Y,Z).
\end{align}
Therefore,
\begin{align}
\nn \widehat{\gamma}(W,R^{\whnabla}(X,Y)Z)=&\spc \widehat{h}(W,R^{\whD}(X,Y)Z)+(\whD_X\widehat{h})(W,A(Y,Z))-(\whnabla_X\widehat{\gamma})(W,\mathcal{P}(Y,Z))\\
\nn &+\widehat{h}(W,(\whD_XA)(Y,Z)))-\widehat{h}(A(X,W),A(Y,Z))-(\whD_Y\widehat{h})(W,A(X,Z))\\
\nn & +(\whnabla_Y\widehat{\gamma})(W,\mathcal{P}(X,Z))-\widehat{h}(W,(\whD_YA)(X,Z)))+\widehat{h}(A(Y,W),A(X,Z))\\
\label{mideq21}& +\widehat{\gamma}(n,n)\lp\Omega(Y,W)\Omega(X,Z)-\Omega(X,W)\Omega(Y,Z)\rp.
\end{align}
By virtue of the definition of $A_{\widehat{h}}$, it holds
\begin{equation}
\nn (\whD_YA_{\widehat{h}})(W,X,Z)=(\whD_Y\widehat{h})(W,A(X,Z))+\widehat{h}(W,\whD_YA(X,Z)).
\end{equation}
This allows us to rewrite \eqref{mideq21} as \eqref{mideq22}.
\end{proof}
In abstract index notation the generalized Gauss identity \eqref{mideq22} takes the form
\begin{align}
\nn v_A^a\widehat{\gamma}_{af}(R^{\whnabla}){}^f{}_{bcd}v_B^bv_C^cv_D^d=&\spc \widehat{h}_{FA}(R^{\whD}){}^F{}_{BCD}+\whD_CA_{\widehat{h}}{}_{ABD}-\whD_{D}A_{\widehat{h}}{}_{ABC}\\
\nn & +\widehat{h}_{FL}A^L{}_{AD}A^{F}{}_{BC}-\widehat{h}_{FL}A^L{}_{AC}A^{F}{}_{BD}+v_D^d(\whnabla_d\widehat{\gamma}_{af})v_A^a\mathcal{P}^{f}{}_{BC}\\
\label{gammaRiemanvvvGENERAL} &-v_C^c(\whnabla_c\widehat{\gamma}_{af})v_A^a\mathcal{P}^{f}{}_{BD}+\widehat{\gamma}(n,n)\lp \Omega_{AD}\Omega_{BC}-\Omega_{AC}\Omega_{BD}\rp,
\end{align}
where the vectors $v_A^a$ are the push-forward with $\psi$ of any basis vectors  $\{\hat{v}_A\}$ in $S$.

\section{Curvature of the metric hypersurface connection $\nablao$: null case}\label{app:curvature:null}

In this appendix we write down a number of identities involving the curvature and Ricci tensors $\Riemo{}^d{}_{abc}$ and $\Riemo_{ac}\defi\Riemo{}^b{}_{abc}$ of
the metric hypersurface connection $\nablao$ (see Section \ref{sec:FHD:Prelim}). 

The contractions $\Riemo{}^d{}_{acb}n^a$ and
$\ell_d \Riemoin{}^d{}_{bca} n^c$ were already
computed for general hypersurface data in \cite[Eq.\ (5.3)-(5.4)]{mars2020hypersurface}. Their particularization to the null case reads
\begin{align}
    \Riemo{}^d{}_{acb} n^a &= 2 n^d \left ( \nablao_{[c} \sone_{b]} +
    P^{af} \U_{a[c} F_{b]f} \right )
    + 2 P^{df} \left ( \nablao_{[c} \U_{b]f} 
    - \sone_{[c} \U_{b]f} \right ),
    \label{Riemon_first}\\
\label{Riemoellnnull}
\ell_d \Riemoin{}^d{}_{bca} n^c & =
  \nablao_b \sone_a -\sone_b\sone_a
  + n( \elltwo)\U_{ba}+\elltwo(\pounds_n\bU)_{ba} 
  +( \F_{af} -\elltwo \U_{af}) P^{cf}\U_{bc}.
\end{align}
In the body of the paper we also need the contraction
$\gamma_{fd}\Riemo{}^d{}_{bca}n^c$.\ To compute this it is convenient to use the tensor ``Lie derivative of $\nablao$ along $n$" (we refer to \cite{yano1957Lie} for details of this tensor).\ More concretely,  let us define the
$(2,1)$-tensor 
$${\sigo} (X,W) \defi  \pounds_{n}\nablao_XW-\nablao_X\pounds_{n}W-\nablao_{\pounds_nX}W,$$ where $X,W\in\Gamma(T\N)$.\   
This tensor 
is related to certain components of the curvature tensor $\Riemo{}^{a}{}_{bcd}$ by the identity
$\Riemo{}^d{}_{bca}n^c={\sigo}{}^d{}_{ab}-\nablao_a\nablao_b n^d$ \cite[Eq.\ (2.23)]{yano1957Lie}.\ The tensor ${\sigo}$ is 
symmetric as a consequence of
$\nablao$ being torsion-free.\ Moreover, it satisfies 
\cite[Eq.\ (4.9)]{yano1957Lie}
\begin{align}
\label{Sigma:eq:proof:1}0 =&\spc \mQ_{abc} -  {\sigo}{}^{d}{} _{ab} \gamma_{d c}-  {\sigo}{}^{d}{}_{ac} \gamma_{b d},\qquad\text{where}\qquad \mQ_{abc}\defi \nablao_{a} \pounds_n \gamma_{bc}-\pounds_n \nablao_{a} \gamma_{bc}.
\end{align}
With a standard trick of index permutation, it is 
straightforward to get
\begin{align}
\gamma_{fd}{\sigo}{}^{d}{}_{ab}  &=\frac{1}{2}\lp \mQ_{abf}+\mQ_{bfa} -\mQ_{fab}\rp.
\label{gammasigo}
\end{align}
Now we have the necessary ingredients to compute the contraction  $\gamma_{fd}\Riemo{}^d{}_{bca}n^c$.
\begin{lemma}
  Let  $\metdata$ be null metric hypersurface.\ Then,
\begin{align}
\label{gammaRiemonthird2} \hspace{-0.3cm} \gamma_{fd}\Riemo{}^d{}_{bca}n^c & = 
  \nablao_b\U_{fa}-\nablao_{f}\U_{ba}+2\sone_f\U_{ba}
  - \sone_b\U_{af}
+\ell_f(\pounds_{n}\bU)_{ba} -\ell_f P^{cd}\U_{bc}\U_{ad}.
\end{align}
\end{lemma}
\begin{proof}
Let us first obtain $\mQ_{abf}$ explicitly. Recalling \eqref{threetensors}
and \eqref{nablaogamma}, it follows
\begin{align}
\mQ_{abf} =&\spc 
2 \nablao_a \U_{bf} 
 + 2 \ell_{(b} \pounds_n \U_{f)a}+ 2 (\pounds_n \ell_{(b}) \U_{f)a}= 
2 \nablao_a \U_{bf}  +  2 \ell_{(b}  \pounds_n \U_{f)a}   +  4  \sone_{(b}  
\U_{f)a},  
\label{expret} 
\end{align}
after using \eqref{soneprop} for $\pounds_n \ellc$ in the second equality.
Inserting \eqref{expret} into 
\eqref{gammasigo} yields
\begin{align}
\gamma_{fd} {\sigo}{}^d{}_{ab} = &\spc 
\nablao_a \U_{bf} + \nablao_b \U_{fa} - \nablao_f \U_{ab} + \ell_f 
\pounds_n \U_{ab} + 2 \sone_f  \U_{ab} .
\label{gamsigo}
\end{align}
It only remains to compute $\gamma_{fd}\nablao_a\nablao_b n^d$ by combining $\gamma(n,\cdot)=0$, $\bU(n,\cdot)=0$, $\ellc(n)=1$ and  \eqref{prod3}-\eqref{prod4}: 
\begin{align}
  \nn \gamma_{fd}\nablao_a\nablao_b n^d=&\spc \nablao_a(\gamma_{fd}\nablao_b n^d)-\nablao_b n^d\nablao_a\gamma_{fd}   =\nablao_a(\gamma_{fd}P^{cd}\U_{bc})+(n^d\sone_b+P^{cd}\U_{bc})(\ell_{f}\U_{ad}+\ell_{d}\U_{af}) \\
=&\spc \nablao_a\U_{bf}+\sone_b\U_{af}+\ell_{f}P^{cd}\U_{ad}\U_{bc}. \label{gamma:nablao:nablao:n:new}
\end{align}
Inserting \eqref{gamsigo}-\eqref{gamma:nablao:nablao:n:new} into the identity
$\gamma_{fd}\Riemo{}^d{}_{bca}n^c=\gamma_{fd}{\sigo}{}^d{}_{ab}-\gamma_{fd}\nablao_a\nablao_b n^d$ \cite{yano1957Lie} proves \eqref{gammaRiemonthird2}.
\end{proof}
The contractions between the Ricci tensor 
$\Riemo_{ab}$ and  $n$ are also needed in the main body of the paper.\ We conclude this appendix with their derivation.
\begin{lemma}\label{lem:Riemo:contracted:with:n}
Let $\metdata$ be null metric hypersurface data.\ Then,  
\begin{align}
  \label{Riemo(n,-)}  \Riemo_{ab}n^a&= 
  \pounds_{n}\sone_b-2P^{af}\U_{ab}\sone_{f}+P^{cf}\nablao_c\U_{bf} -\nablao_b(\textup{tr}_P\bU) + (\textup{tr}_P\bU)\sone_b,\\
  \label{Riemosym(n,-)} \Riemo_{(ab)}n^a& =
    \dfrac{1}{2}\pounds_{n}\sone_b-2P^{af}\U_{ab}\sone_{f}+P^{cf}\nablao_c\U_{bf} -\nablao_b(\textup{tr}_P\bU)+ (\textup{tr}_P\bU)\sone_b,\\
\label{Riemo(n,n)} \Riemo_{ab}n^an^b& 
=-P^{ab}P^{cd}\U_{ac}\U_{bd} -n(\textup{tr}_P\bU).
\end{align}
\end{lemma}
\begin{proof}
  To prove
  \eqref{Riemo(n,-)} we contract indices $d$ and $c$ in
  \eqref{Riemon_first} and use identity \eqref{contrNantisym} for $\bs{\theta}=\bsone$ together with $\bm{\sone}(n) =0$  so that
  $2 n^c \nablao_{[c} \sone_{b]} = \pounds_n \sone_b$,
  and hence
  \begin{align*}
    \Riemo_{ab}n^a\defi  \Riemo{}^c{}_{acb}n^a = &\spc\pounds_n \sone_b - P^{af} \U_{ab} \sone_f + P^{cf} \left ( \nablao_c \U_{bf}-
    \sone_c \U_{bf} \right )
    -  P^{cf} \left (\nablao_b \U_{cf}
    - \sone_b \U_{cf} \right ).
  \end{align*}
  We arrive at \eqref{Riemo(n,-)} after ellaborating the third term
  according to  $P^{cf} \nablao_b \U_{cf} =  
    \nablao_b \left ( \textup{tr}_P\bU \right )
    - (\nablao_b P^{cf}) \U_{cf}
    =     \nablao_b \left ( \textup{tr}_P\bU \right )$, the last equality being a consequence of  \eqref{nablaoP}.

    To prove \eqref{Riemosym(n,-)}
    we first note that
    \eqref{antisymRiemo}  becomes 
    $\Riemo_{(ab)}=\Riemo_{ab}-\nablao_{\lc a\rd}\sone_{\ld b\rc}$.\  Contracting with $n^a$ and using  $2 n^c \nablao_{[c} \sone_{b]} = \pounds_n \sone_b$ again establishes
    \eqref{Riemosym(n,-)}. Finally, \eqref{Riemo(n,n)} follows
    from any of the previous after noticing that
$n^bP^{cd}\nablao_c\U_{bd}=-\U_{bd}P^{cd}\nablao_cn^b=-P^{ab}P^{cd}\U_{ac}\U_{bd}$.
\end{proof}

\section{Gauge transformation results}\label{sec:gauge-fix:res} 
In this appendix we find the transformation law of various geometric quantities in a null hypersurface data.\ This is used in the main text to identify a number of ${\mathcal G}_1$-gauge invariant quantities. 

For any gauge parameters $\{z, V\}$ we introduce
\begin{align}
\label{def:gauge:w:f}\bw \defi  \gamma (V, \cdot), \qquad \f \defi  \ellc ( V),
\end{align}
which 
is equivalent to
\begin{align}
  V^a = \f n^a + P^{ab}  w_b \label{Vdecom}
\end{align}
because $\gamma_{ab}\big(V^a-P^{ac}w_c\big)\stackbin{\eqref{prod4}}=w_b-(\delta_b^c-n^c\ell_b)w_c\stackbin{\eqref{prod1}}=0$ (hence $V^a-P^{ac}w_c$ is proportional to $n^a$), and $\ell_{a}\big(V^a-P^{ac}w_c\big)\stackbin{\eqref{prod3}}=f+\elltwo n^cw_c\stackbin{\eqref{prod1}}=f$ (so the proportionality function is precisely $f$).\ In terms of $\{\bw,\f\}$, the gauge transformations \eqref{gaugegamma&ell2} 
become

\vspace{-0.6cm}

\noindent
\begin{minipage}[t]{0.4\textwidth}
\begin{align}
\label{gaugeellbis}\mathcal{G}_{\lp z,V\rp}\lp\ellc\rp&=z\lp\ellc+ \bw \rp,
\end{align}
\end{minipage}%
\hfill
\hfill
\begin{minipage}[t]{0.6\textwidth}
\begin{align}
\label{gaugeell2bis}\mathcal{G}_{\lp z,V\rp}\big(\ell^{(2)}\big)&=z^2\big( \ell^{(2)}+ 2 \f + P(\bw,\bw) \big).
\end{align}
\end{minipage}
\\

\begin{lemma}
\label{Gauge_trans_UFsone}
Let $\hypdata$ be null hypersurface data.\ Consider arbitrary gauge parameters $\{z,V\}$ and define the covector $\bw$ and the function $\f$ according to \eqref{def:gauge:w:f}.\ Then, the following gauge transformations hold:
\begin{align}
\mathcal{G}_{\lp z,V\rp}\lp \bU \rp
 & =  \frac{1}{z}\bU, \label{Uprime}\\
\mathcal{G}_{\lp z,V\rp}\lp \bF \rp & = z \lp \bF + \frac{1}{2}  d \bw \rp
+ \frac{1}{2} \tdo z \wedge \lp \ellc + \bw \rp, \label{Fprime} \\
\mathcal{G}_{\lp z,V\rp}\lp \bsone \rp & = \bsone +\frac{1}{2} \pounds_n \bw
+  \frac{n(z)}{2z}  \lp  \ellc+ \bw \rp
- \frac{1}{2z} \tdo z, \label{soneprime} \\
\mathcal{G}_{\lp z,V\rp}\lp \bsYn \rp & = \bsYn + \frac{1}{2z} \tdo z
+ \frac{n(z)}{2z} \lp \ellc + \bw \rp + \frac{1}{2} \pounds_n \bw 
- \bU (V, \cdot ), \label{Ynprime} \\
\mathcal{G}_{\lp z,V\rp}\lp \Q \rp & = \frac{1}{z} \lp \Q - \frac{n(z)}{z}  \rp.
\label{Qprime}
\end{align}
\end{lemma}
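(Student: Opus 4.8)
The plan is to establish each of the five gauge transformations \eqref{Uprime}--\eqref{Qprime} by directly applying the definitions \eqref{threetensors}, \eqref{defY(n,.)andQ} to the gauge-transformed data and simplifying using the already-established transformation laws \eqref{gaugegamma&ell2}--\eqref{gaugen}, \eqref{gaugeY}, together with the null-case identities \eqref{soneprop}--\eqref{nablaonnull} and the relation $\ntwo = 0$. Throughout I will abbreviate $\G_{(z,V)}$-transformed quantities with a prime. A recurring technical point is that $n$ itself transforms: in the null case \eqref{gaugen} reduces to $n' = z^{-1} n$, which simplifies matters considerably since $\ntwo = 0$ kills the $n^{(2)} V$ term.

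First I would treat $\bU$. Since $\ntwo = 0$ we have $\bU = \tfrac12 \pounds_n \gamma$, and because $\gamma' = \gamma$ and $n' = z^{-1} n$, I compute $\bU' = \tfrac12 \pounds_{z^{-1} n}\gamma = \tfrac12 z^{-1}\pounds_n\gamma + \tfrac12\, d(z^{-1})\otimes_s \gamma(n,\cdot)$; the second term vanishes because $\gamma(n,\cdot) = 0$, giving \eqref{Uprime}. Next, for $\bF = \tfrac12 d\ellc$ I use the transformed covector \eqref{gaugeellbis}, $\ellc' = z(\ellc + \bw)$, and expand $d(z(\ellc+\bw)) = z\, d(\ellc + \bw) + dz\wedge(\ellc+\bw)$; recalling $d\ellc = 2\bF$ yields \eqref{Fprime}. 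For $\bsone = \bF(n,\cdot)$, the cleanest route in the null case is $\bsone = \tfrac12 \pounds_n\ellc$ (from \eqref{soneprop}); then $\bsone' = \tfrac12\pounds_{n'}\ellc' = \tfrac12\pounds_{z^{-1}n}(z(\ellc+\bw))$, and I expand the Lie derivative using $\pounds_{fX}\omega = f\pounds_X\omega + df\,\omega(X)$ and the Leibniz rule for $\pounds_n(z(\ellc+\bw))$, collecting the $n(z)$ and $dz$ terms carefully to match \eqref{soneprime}.

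The transformation of $\bsYn = \bY(n,\cdot)$ is the main obstacle, since it requires the full gauge law \eqref{gaugeY} for $\bY$ contracted with the transformed $n' = z^{-1}n$. I would write $\bsYn' = \bY'(n',\cdot) = z^{-1}\bigl(z\bY + (\ellc+\gamma(V,\cdot))\otimes_s dz + \tfrac{z}{2}\pounds_V\gamma\bigr)(n,\cdot)$ and evaluate each piece on $n$: the symmetrized product contributes $\tfrac12\bigl(n(z)(\ellc+\bw) + (1 + \bw(n))\,dz\bigr)$ after using $\ellc(n) = 1$, and the term $\tfrac12\pounds_V\gamma$ contracted with $n$ must be rewritten as $\tfrac12\pounds_n\bw - \bU(V,\cdot)$ plus terms that cancel. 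The key identity here is that $(\pounds_V\gamma)(n,\cdot) = \pounds_n(\gamma(V,\cdot)) - \gamma(\pounds_n V,\cdot) = \pounds_n\bw - (\pounds_V\gamma)(n,\cdot) + \ldots$; more directly I would use $(\pounds_V\gamma)(n,\cdot) = V^a\nablao_a(\gamma(n,\cdot)) + \ldots$ together with $\gamma(n,\cdot)=0$, $\U = \tfrac12\pounds_n\gamma$, and the decomposition \eqref{Vdecom} to isolate the $-\bU(V,\cdot)$ and $\tfrac12\pounds_n\bw$ contributions. Managing the symmetrization conventions (recall $A\otimes_s B = \tfrac12(A\otimes B + B\otimes A)$) and the $\bw(n)$ terms is where errors are most likely.

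Finally, $\Q = -\bY(n,n)$ follows by contracting the previous computation a second time with $n$: from $\bsYn'$ above, $\Q' = -\bY'(n',n') = -z^{-1}\bsYn'(n)$, and using $\ellc(n)=1$, $\bw(n) = \gamma(V,n) = 0$, $\bU(V,n) = 0$ (since $\bU(n,\cdot)=0$), and $(\pounds_n\bw)(n) = n(\bw(n)) - \bw(\pounds_n n) = 0$, all the $V$-dependent terms drop out and one is left with $\Q' = z^{-1}(\Q - n(z)/z)$, which is \eqref{Qprime}. I expect the $\bsYn$ step to dominate the work; the other four are short once the null simplifications $\ntwo=0$, $\gamma(n,\cdot)=0$, $\bU(n,\cdot)=0$ are invoked at the outset.
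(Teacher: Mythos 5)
Your proposal is correct and follows essentially the same route as the paper's proof: transform $n\mapsto z^{-1}n$ (valid since $\ntwo=0$), apply the definitions of $\bU$, $\bF$, $\bsone$, $\bsYn$, $\Q$ to the primed data, and reduce with $\gamma(n,\cdot)=0$, $\bw(n)=0$, $\ellc(n)=1$ and the Leibniz rule — the only cosmetic difference being that you compute $\bsone'$ from $\bsone=\tfrac12\pounds_n\ellc$ where the paper uses $\bsone=i_n\bF$ plus Cartan's formula, which is equivalent. One caution on the $\bsYn$ step: the intermediate identity you display, $(\pounds_V\gamma)(n,\cdot)=\pounds_n(\gamma(V,\cdot))-\gamma(\pounds_nV,\cdot)$, is mis-stated (that right-hand side equals $(\pounds_n\gamma)(V,\cdot)=2\bU(V,\cdot)$); the correct chain is $(\pounds_V\gamma)(n,\cdot)=-\gamma(\pounds_Vn,\cdot)=\gamma(\pounds_nV,\cdot)=\pounds_n\bw-2\bU(V,\cdot)$, which does produce the $\tfrac12\pounds_n\bw-\bU(V,\cdot)$ contribution you correctly state in your final expression.
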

\begin{proof}
We use a prime to denote a gauge-transformed quantity. Recalling \eqref{gaugegamma&ell2},   
\eqref{gaugen} we get
\begin{align*}
 \bU'& = \frac{1}{2} \pounds_{n'} \gamma = 
\frac{1}{2} \pounds_{z^{-1} n} \gamma = \frac{1}{2z}\pounds_n \gamma = 
{z}^{-1} \bU. \\
\bF' &= \frac{1}{2} \tdo \ellc' =
\frac{z}{2} \lp \tdo \ellc + \tdo \bw \rp 
+ \frac{1}{2} dz \wedge \lp \ellc + \bw \rp 
= z \lp \bF + \frac{1}{2}  d \bw \rp
+ \frac{1}{2} \tdo z \wedge \lp \ellc + \bw \rp, \\
\bsone' & = i_{n'} \bF'  = z^{-1} i_n \bF' 
= \bsone + \frac{1}{2} i_n d \bw + \frac{n(z)}{2z}  \lp \ellc
+ \bw \rp - \frac{1}{2z} \tdo z,
\end{align*}
where $i_n$ denotes interior contraction in the first index and in the 
last equality we used $\bw(n)=0$. This proves \eqref{Uprime}  and \eqref{Fprime}, while
\eqref{soneprime} follows after using Cartan's formula $
\pounds_n \bw = i_n \tdo \bw + \tdo i_n \bw  = i_n \tdo \bw$.

For the transformation of $\bsYn$ we contract the first equality in \eqref{gaugeY} with $z^{-1} n$ to get
\begin{align*}
\bsYn'  & = 
\bsYn + \frac{1}{2z} \tdo z + \frac{n(z)}{2z} \ellc
- \frac{1}{2z} \gamma  \big ( \pounds_{z V} n, \cdot \big ) 
= 
\bsYn + \frac{1}{2z} \tdo z + \frac{n(z)}{2z} \ellc
+ \frac{1}{2z} \gamma  \big ( \pounds_{n} (zV), \cdot \big )  \\
& =  \bsYn + \frac{1}{2z} \tdo z + \frac{n(z)}{2z} \ellc
+ \frac{1}{2z} \pounds_{n} \lp  \gamma ( zV, \cdot ) \rp
- \frac{1}{2z} \lp \pounds_n \gamma \rp ( zV, \cdot)   
\end{align*}
after using the antisymmetry of the Lie bracket and ``integrating by parts".\ Expression \eqref{Ynprime} follows because $\gamma(V,\cdot ) = \bw$. 
The last transformation follows at once from \eqref{Ynprime} and the definition 
$\Q = -\bsYn(n)$ 
\end{proof}
The following corollary plays a role in the main body of the paper.
\begin{corollary}
  \label{Gr-s}
The covector $\bsone - \bsYn$ has the following simple gauge behaviour
\begin{align*}
  \mathcal{G}_{\lp z,V\rp}\lp
  \bsone - \bsYn \rp = 
\bsone - \bsYn + \bU (V, \cdot) - \frac{1}{z} \tdo z.
\end{align*}
\end{corollary}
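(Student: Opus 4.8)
The plan is to obtain the result by a direct subtraction of the two gauge transformations \eqref{soneprime} and \eqref{Ynprime} already established in Lemma \ref{Gauge_trans_UFsone}. First I would write out $\mathcal{G}_{\lp z,V\rp}\lp \bsone \rp$ and $\mathcal{G}_{\lp z,V\rp}\lp \bsYn \rp$ side by side, form their difference term by term, and keep track of which contributions survive. Since both transformations are affine in the gauge data, the difference $\mathcal{G}_{\lp z,V\rp}\lp \bsone - \bsYn \rp = \mathcal{G}_{\lp z,V\rp}\lp \bsone \rp - \mathcal{G}_{\lp z,V\rp}\lp \bsYn \rp$ can be read off immediately.

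Inspecting the two expressions, the contributions $\frac{1}{2}\pounds_n \bw$ and $\frac{n(z)}{2z}\lp \ellc + \bw\rp$ appear with identical sign in both lines and therefore cancel in the difference. The two remaining derivative-of-$z$ terms, namely $-\frac{1}{2z}\tdo z$ coming from $\mathcal{G}_{\lp z,V\rp}\lp \bsone \rp$ and a further $-\frac{1}{2z}\tdo z$ produced by subtracting the $+\frac{1}{2z}\tdo z$ present in $\mathcal{G}_{\lp z,V\rp}\lp \bsYn \rp$, add up to $-\frac{1}{z}\tdo z$. Finally, the term $-\bU(V,\cdot)$ in \eqref{Ynprime} enters the difference with a flipped sign, producing $+\bU(V,\cdot)$. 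Collecting the survivors yields exactly $\bsone - \bsYn + \bU(V,\cdot) - \frac{1}{z}\tdo z$, as claimed. There is no genuine obstacle here: the entire content of the corollary is the observation that the $\pounds_n \bw$ and $\frac{n(z)}{2z}\lp \ellc+\bw\rp$ terms are common to both covectors and drop out, so that the combination $\bsone - \bsYn$ transforms inhomogeneously only through $\bU(V,\cdot)$ and $\tfrac1z\tdo z$.
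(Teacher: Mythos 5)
Your proposal is correct and is exactly how the paper obtains the result: the corollary is stated as an immediate consequence of Lemma \ref{Gauge_trans_UFsone}, i.e.\ by subtracting \eqref{Ynprime} from \eqref{soneprime}, with the $\frac{1}{2}\pounds_n \bw$ and $\frac{n(z)}{2z}(\ellc+\bw)$ terms cancelling and the two $\frac{1}{2z}\tdo z$ contributions combining into $-\frac{1}{z}\tdo z$. Nothing is missing.
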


\section*{Acknowledgements}
The authors acknowledge financial support under the project PID2021-122938NB-I00 (Spanish Ministerio de Ciencia, Innovaci\'on y Universidades and FEDER ``A way of making Europe'') and  SA096P20 (JCyL). M. Manzano also acknowledges the Ph.D. grant FPU17/03791 (Spanish Ministerio de Ciencia, Innovaci{\'o}n y Universidades).

\begingroup
\let\itshape\upshape
\bibliographystyle{acm}

\bibliography{ref}

\end{document}